\newtheorem{thm}{Theorem}[section]
\newtheorem{co}[thm]{Corollary}
\newtheorem{lem}[thm]{Lemma}
\newtheorem{assumption}[thm]{Assumption}
\newtheorem{definition}[thm]{Definition}
\newtheorem{example}[thm]{Example}
\newtheorem{remark}[thm]{Remark}
\newenvironment{rem}{\begin{remark}\rm}{\end{remark}}
\newcommand{\E}{\mathbb{E}}
\newcommand{\EX}{{\mathbb{E}}}
\newcommand{\PX}{{\mathbb{P}}}
\title{On Sampling Continuous-Time AWGN Channels}
\author{\small \begin{tabular}{ccc}
Guangyue Han & Shlomo Shamai\\
The University of Hong Kong & Technion-Israel Institute of Technology\\
email:  ghan@hku.hk & email: sshlomo@ee.technion.ac.il\\
\end{tabular}}
\date{\today}
\begin{document} \maketitle

\begin{abstract}
For a continuous-time additive white Gaussian noise (AWGN) channel with possible feedback, it has been shown that as sampling gets infinitesimally fine, the mutual information of the associative discrete-time channels converges to that of the original continuous-time channel. We give in this paper more quantitative strengthenings of this result, which, among other implications, characterize how over-sampling approaches the true mutual information of a continuous-time Gaussian channel with bandwidth limit. The assumptions in our results are relatively mild. In particular, for the non-feedback case, compared to the Shannon-Nyquist sampling theorem, a widely used tool to connect continuous-time Gaussian channels to their discrete-time counterparts that requires the band-limitedness of the channel input, our results only require some integrability conditions on the power spectral density function of the input.
\end{abstract}

\section{Introduction}

In this paper, we are concerned with the following continuous-time AWGN channel:
\begin{equation} \label{ct}
Y(t) = \int_0^t X(s) ds + B(t),~~~t \geq 0,
\end{equation}
where $\{B(t)\}$ denotes the standard Brownian motion. We will examine the channel (\ref{ct}) for both the non-feedback and feedback cases. More specifically, the non-feedback case refers to the scenario when the feedback is not allowed in the channel, and thereby the channel input $\{X(s)\}$ takes the following form:
\begin{equation} \label{non-feedback-case}
X(s) = g(s, M),
\end{equation}
where $g$ is a real-valued deterministic function, $M$ is a random variable, independent of $\{B(t)\}$ and often interpreted as the {\it message} to be transmitted through the channel. By contrast, for the feedback case, $X(s)$ may also depend on the previous output with the following form:
\begin{equation} \label{feedback-case}
X(s) = g(s, M, Y_0^s),
\end{equation}
where $Y_0^s \triangleq \{Y(r): 0 \leq r \leq s\}$ is the channel output up to time $s$ that is fed back to the sender, which will be referred to as the {\em channel feedback} up to time $s$. Here, we remark that for the non-feedback case as in (\ref{non-feedback-case}), it can be verified that for any $T > 0$,
$$
I(X_0^T; Y_0^T) = I(M; Y_0^T),
$$
which however does not hold true for the feedback case as in (\ref{feedback-case}) when the channel is intricately characterized by the following stochastic differential equation:
\begin{equation} \label{SDE}
Y(t) = \int_0^t g(s, M, Y_0^s) ds + B(t),~~~t \geq 0,
\end{equation}
rather than a simple input-output equation.

For any $T > 0$, we say that $t_{0}, t_{1}, \dots, t_{n} \in \mathbb{R}$~\footnote{Here and hereafter, all $t_i$ depend on $T$ and $n$, however we suppress this notational dependence for brevity.} are {\em evenly spaced} over $[0, T]$ if $t_{0} = 0$, $t_{n} = T$ and $\delta_{T, n} \triangleq t_{i} - t_{i-1} = T/n$ for all feasible $i$. Sampling the continuous-time channel (\ref{ct}) over the time window $[0, T]$ with respect to evenly spaced $t_{0}, t_{1}, \dots, t_{n}$, we obtain the following discrete-time Gaussian channel~\footnote{The sampler associated with (\ref{dt}) has been examined from a communication system design perspective and termed as the {\em integrate-and-dump} filter; see, e.g.,~\cite{CarlsonCrilly2009} and references therein.}
\begin{equation} \label{dt}
Y(t_{i}) = \int_{0}^{t_{i}} X(s) ds + B(t_{i}), \quad i = 1, 2, \dots, n.
\end{equation}
It turns out that if the sampling is ``fine'' enough, the mutual information of the continuous-time Gaussian channel (\ref{ct}) over $[0, T]$ can be ``well-approximated'' by that of the discrete-time Gaussian channel (\ref{dt}). More precisely, it has been established in~\cite{LiuHan2019} that under some mild assumptions,
\begin{equation} \label{limit-version}
\lim_{n \to \infty} I(M; Y(\Delta_{T, n})) = I(M; Y_0^T),
\end{equation}
where
$$
\Delta_{T, n} \triangleq \{t_{0}, t_{1}, \dots, t_{n}\}, \quad Y(\Delta_{T, n}) \triangleq \{Y(t_{0}), Y(t_{1}), \dots, Y(t_{n})\}.
$$
This result connects a continuous-time Gaussian channels with its associative discrete-time versions, which has been used to recover a classical information-theoretic formula~\cite{FourPersons}.

Strictly speaking, the above result is not new; as a matter of fact, it has been ``known'' for decades. Indeed, though not explicitly stated, a more general result, which implies (\ref{limit-version}) as a special corollary, follows from an appropriately modified argument in~\cite{GelfandYaglom1957} (see more details in Section~\ref{already-by-GY}). Moreover, the functional analysis approach as in~\cite{GelfandYaglom1957} tackles  more general channels and thus reveals the essence beneath the connection as in (\ref{limit-version}). By comparison, when it comes to merely establishing the result (\ref{limit-version}), our stochastic approach in~\cite{LiuHan2019} is indirect and cumbersome, only scratching the surface of the connection.

That being said, the stochastic calculus approach does allow us to capitalize on the peculiar characteristics of the continuous-time AWGN channel formulated as in (\ref{ct}), which is already evidenced by the approximation theorems established therein that do not seem to follow from the aforementioned functional analysis approach. The present paper will continue to employ the stochastic calculus approach to conduct a closer examination of (\ref{ct}) and quantitatively strengthen (\ref{limit-version}), particularly by zooming in on its convergence behavior. Our results encompass both the non-feedback case (Theorem~\ref{non-feedback-theorem}) and the feedback case (Theorem~\ref{feedback-theorem}), which may be used for a finer analysis of the channel (\ref{ct}) with possible feedback from an information-theoretic perspective. In particular, among other possible implications, our results characterize how over-sampling approaches the true mutual information of a continuous-time band-limited AWGN channel over a finite time window (Corollary~\ref{non-feedback-corollary}).

We would like to add that the assumptions imposed in our results are rather mild. Indeed, the celebrated Shannon-Nyquist sampling theorem~\cite{sh49, ny24}, a widely used tool to connect a continuous-time non-feedback AWGN channel to its discrete-time versions, requires the band-limitedness of the channel input. By comparison, with considerably stronger conclusions, Theorem~\ref{non-feedback-theorem} only require some integrability conditions on the power spectral density function of the channel input and Theorem~\ref{feedback-theorem} only requires some mild regularity conditions that are more or less standard in the theory of stochastic differential equations.

\section{Notation and Preliminaries} \label{notations}

We use $(\Omega,\mathcal{F},\PX)$ to denote the underlying probability space, and $\EX$ to denote the expectation with respect to the probability measure $\PX$. As is typical in the theory of SDEs, we assume the probability space is equipped with a filtration $\{\mathcal{F}_t: 0 \leq t < \infty\}$, which satisfies the {\em usual conditions}~\cite{ka91} and is rich enough to accommodate the standard Brownian motion $\{B(t): 0 \leq t < \infty\}$. Throughout the paper, we will use uppercase letters (e.g., $X$, $Y$, $Y^{(n)}$) to denote random variables or processes, and their lowercase counterparts (e.g., $x$, $y$, $y^{(n)}$) to denote their realizations.

Let $C[0, \infty)$ denote the space of all continuous functions over $[0, \infty)$, and for any $t > 0$, let $C[0, t]$ denote the space of all continuous functions over $[0, t]$. As usual, we will equip the space $C[0, \infty)$ with the filtration $\{\mathcal{B}_t\}_{0 \leq t < \infty}$, where $\mathcal{B}_{\infty}$ denotes the standard Borel $\sigma$-algebra on the space $C[0, \infty)$ and $\mathcal{B}_t = \pi_t^{-1}(\mathcal{B}_{\infty})$, where $\pi_t : C[0, \infty) \to C[0, t]$ is defined as $(\pi_tx)(s) = x(t\wedge s)$.

For any $\varphi \in C[0, \infty)$, we use $\varphi(\{t_1, t_2, \dots, t_m\})$ to denote $\{\varphi(t_1), \varphi(t_2), \dots, \varphi(t_n)\}$ and $\varphi_0^t$ to denote $\{\varphi(s): 0 \leq s \leq t\}$. The sup-norm of $\varphi_0^t$, denoted by $\|\varphi_0^t\|$, is defined as $\|\varphi_0^t\| \triangleq \sup_{0 \leq s \leq t} |\varphi(s)|$; and similarly, we define $\|\varphi_0^t-\psi_0^t\| \triangleq \sup_{0 \leq s \leq t} |\varphi(s)-\psi(s)|$. For any $\varphi, \psi \in C[0, \infty)$, slightly abusing the notation, we define $\|\varphi_0^s-\psi_0^t\| \triangleq \|\hat{\varphi}_0^{\infty}-\hat{\psi}_0^{\infty}\|$, where $\hat{\varphi}, \hat{\psi} \in C[0, \infty)$ are ``stopped'' versions of $\varphi, \psi$ at time $s, t$, respectively, with $\hat{\varphi}(r) \triangleq \varphi(r \wedge s)$ and $\hat{\psi}(r) \triangleq \psi(r \wedge t)$ for any $r \geq 0$.

Let $X, Y, Z$ be random variables defined on the probability space $(\Omega,\mathcal{F},\PX)$, which will be used to illustrate most of the notions and facts in this section (note that the same notations may have different connotations in other sections). Note that in this paper, a random variable can be discrete-valued with a probability mass function, real-valued with a probability density function or path-valued (more precisely, $C[0, \infty)$ or $C[0, t]$-valued).

For any two probability measures $\mu$ and $\nu$, we write $\mu\sim\nu$ to mean they are equivalent, namely, $\mu$ is absolutely continuous with respect to $\nu$ and vice versa. For any two path-valued random variables $X_0^t = \{X(s); 0 \leq s \leq t\}$ and $Y_0^t =\{Y(s); 0 \leq s \leq t\}$, we use $\mu_{X_0^t}$ and $\mu_{Y_0^t}$ to denote the probability distributions on $\mathcal{B}_t$ induced by $X_0^t$ and $Y_0^t$, respectively; and if $\mu_{Y_0^t}$ is absolutely continuous with respect to $\mu_{X_0^t}$, we write the Radon-Nikodym derivative of $\mu_{Y_0^t}$ with respect to $\mu_{X_0^t}$ as $d\mu_{Y_0^t}/d\mu_{X_0^t}$. We use $\mu_{Y_0^t|Z=z}$ denote the probability distribution on $\mathcal{B}_t$ induced by $Y_0^t$ given $Z=z$, and $d\mu_{Y_0^t|Z=z}/d\mu_{X_0^t|Z=z}$ to denote the Radon-Nikodym derivative of $Y_0^t$ with respect to $X_0^t$ given $Z=z$. Obviously, when $Z$ is independent of $X$, $d\mu_{Y_0^t|Z=z}/d\mu_{X_0^t|Z=z}= d\mu_{Y_0^t|Z=z}/d\mu_{X_0^t}$.

By definition, for $\EX[X|\sigma(Y, Z)]$, the conditional expectation of $X$ with respect to the $\sigma$-algebra generated by $Y$ and $Z$, there exists a $\sigma(Y) \otimes \sigma(Z)$-measurable function $\Psi(\cdot, \cdot)$ such that $\Psi(Y, Z)=\EX[X|\sigma(Y, Z)]$.
For notational convenience, we will in this paper simply write $\EX[X|\sigma(Y, Z)]$ as $\EX[X|Y, Z]$, and $\Psi(y, z)$ as $\EX[X|y, z]$ and furthermore, $\Psi(Y, z)$ as $\EX[X|Y, z]$.

A {\em partition} of the probability space $(\Omega,\mathcal{F},\PX)$ is a disjoint collection of elements of $\mathcal{F}$ whose union is $\Omega$. It is well known there is a one-to-one correspondence between finite partitions and finite sub-$\sigma$-algebras of $\mathcal{F}$. For a finite sub-$\sigma$-algebra $\mathcal{H} \subset \mathcal{F}$, let $\eta(\mathcal{H})$ denote the corresponding finite partition. The entropy of a finite partition $\xi=\{A_1, A_2, \cdots, A_m\}$, denoted by $H(\xi)$, is defined as $H(\xi) \triangleq \sum_{i=1}^m -\PX(A_i)\log \PX(A_i)$, whereas the conditional entropy of $\xi$ given another finite partition $\zeta=\{B_1, B_2, \dots, B_n\}$, denoted by $H(\xi|\zeta)$, is defined as $H(\xi|\zeta) \triangleq \sum_{j=1}^n \sum_{i=1}^m -\PX(A_i \cap B_j) \log \PX(A_i|B_j)$. The mutual information between the above-mentioned two partitions $\xi$ and $\zeta$, denoted by $I(\xi; \zeta)$, is defined as $I(\xi; \zeta) \triangleq \sum_{j=1}^n \sum_{i=1}^m -\PX(A_i \cap B_j) \log \PX(A_i \cap B_j)/\PX(A_i) \PX(B_j)$.

For the random variable $X$, we define
$$
\eta(X) \triangleq \{\eta(\mathcal{H}): \mathcal{H} \mbox{ is a finite sub-}\sigma\mbox{-algebra of } \sigma(X)\}.
$$
The {\em entropy} of the random variable $X$, denoted by $H(X)$, is defined as
$$
H(X) \triangleq \sup_{\xi \in \eta(X)} H(\xi).
$$
The {\em conditional entropy} of $Y$ given $X$, denoted by $H(Y|X)$, is defined as
$$
H(Y|X) \triangleq \inf_{\xi \in \eta(X)} \sup_{\zeta \in \eta(Y)} H(\zeta|\xi).
$$
Here, we note that if $X$ and $Y$ are independent, then obviously it holds that
\begin{equation} \label{Y-X-Y}
H(Y|X)=H(Y).
\end{equation}
The {\em mutual information} between $X$ and $Y$, denoted by $I(X; Y)$, is defined as
$$
I(X; Y) \triangleq \sup_{\xi \in \eta(X), \; \zeta \in \eta(Y)} I(\xi; \zeta).
$$
A couple of properties of mutual information are in order. First, it can be shown, via a concavity argument, that the mutual information is always non-negative. Second, the mutual information is determined by the $\sigma$-algebras generated by the corresponding random variables. For example, for any random variables $X', Y', X'', Y''$,
\begin{equation} \label{I=I}
I(X'; Y')=I(X'; Y'') \mbox{ if } \sigma(X') = \sigma(X'') \mbox{ and } \sigma(Y')=\sigma(Y'')
\end{equation}
and
\begin{equation} \label{I<=I}
I(X'; Y') \leq I(X'; Y'') \mbox{ if } \sigma(X') \subset \sigma(X'') \mbox{ and }  \sigma(Y') \subset \sigma(Y'').
\end{equation}
For another example, we have
$$
I(X; Y)=I(X, X; Y, Y+X),~~~I(X; Y) \leq I(X; Y, Z).
$$

It turns out that for the case that $X, Y, Z$ are all discrete random variables, all the above-mentioned notions are well-defined and can be computed rather explicitly: $H(X)$ can be computed as $H(X) = \EX[-\log p_X(X)]$, where $p_X(\cdot)$ denotes the probability mass function of $X$; $H(Y|X)$ can be computed as $H(Y|X) = \EX[-\log p_{Y|X}(Y|X)]$, where $p_{Y|X}(\cdot|\cdot)$ denotes the conditional probability mass function of $Y$ given $X$; $I(X; Y)$ can be computed as
\begin{equation} \label{mutual-information}
I(X; Y) = \EX\left[\log \frac{p_{Y|X}(Y|X)}{p_Y(Y)}\right].
\end{equation}
The mutual information is intimately related to entropy. As an example, one verifies that
\begin{equation} \label{I-H}
I(X; Y)=H(Y)-H(Y|X).
\end{equation}
Note that the quality (\ref{I-H}) may fail if non-discrete random variables are involved, since the corresponding entropies $H(Y)$ and $H(Y|X)$ can be infinity. For the case of real-valued random variables with density, this issue can be circumvented using the notion of differential entropy, as elaborated below.

Now, assume that $Y$ is real-valued with probability density function $f_Y(\cdot)$. The {\em differential entropy} of $Y$, denoted by $h(Y)$, is defined as $h(Y) \triangleq \EX[-\log f_Y(Y)]$. And the {\em differential conditional entropy} of $Y$ given a finite partition $\xi$, denoted by $h(Y|\zeta)$, is defined as $h(Y|\zeta) \triangleq \sum_{j=1}^n \PX(A_i) \int f_{Y|A_i}(x) \log f_{Y|A_i}(x) dx$. The {\em differential conditional entropy} of $Y$ given $X$ (which can be discrete-, real- or path-valued), denoted by $h(Y|X)$, is defined as $h(Y|X) \triangleq \inf_{\xi \in \eta(X)} h(Y|\xi)$ (which, similarly as in (\ref{Y-X-Y}), reduces to $h(Y)$ if $Y$ is independent of $X$); in particular, if the conditional probability density function $f_{Y|X}(\cdot|\cdot)$ exists, then $h(Y|X)$ can be explicitly computed as $\EX[-\log f_{Y|X}(Y|X)]$. As mentioned before, the aforementioned failure of (\ref{I-H}) can be salvaged with the notion of differential entropy:
\begin{equation} \label{I-h}
I(X; Y)=h(Y)-h(Y|X).
\end{equation}

Here we emphasize that all the above-mentioned definitions naturally carry over to the setting when some/all of the invovled random variables are replaced by vectors of random variables. For a quick example, let $Y = \{Y_1, Y_2, \dots, Y_n\}$, where each $Y_i$ is a real-valued random variable with density. Then, the differential entropy $h(Y)$ of $Y$ is defined as
$$
h(Y)=h(Y_1, Y_2, \dots, Y_n) \triangleq \EX[- \log f_{Y_1, Y_2, \dots, Y_n}(Y_1, Y_2, \dots, Y_n)],
$$
where $f_{Y_1, Y_2, \dots, Y_n}$ is the joint probability density function of $Y_1, Y_2, \dots, Y_n$.

The notion of mutual information can be further extended to generalized random processes, which we will only briefly describe and we refer the reader to~\cite{GelfandYaglom1957} for a more comprehensive exposition.

The mutual information between two generalized random processes $X=\{X(t)\}$ and $Y=\{Y(t)\}$ is defined as
\begin{equation} \label{general-definition}
I(X; Y) = \sup I(X(\phi_1), X(\phi_2), \dots, X(\phi_m); Y(\psi_1), Y(\psi_2), \dots, Y(\psi_n)),
\end{equation}
where the supremum is over all possible $n, m \in \mathbb{N}$ and all possible testing functions $\phi_1, \phi_2, \dots, \phi_m$ and $\psi_1, \psi_2, \dots, \psi_n$, and we have defined
$$
X(\phi_i) = \int X(t) \phi_i(t) dt, \quad i=1, 2, \dots, m, \mbox{ and } Y(\psi_j) = \int Y(t) \psi_j(t) dt, \quad j=1, 2, \dots, n.
$$
It can be verified that the general definition of mutual information as in (\ref{general-definition}) includes all previous definitions as special cases; moreover, when one of $X$ and $Y$, say, $Y$, is a random variable, the general definition boils down to
$$
I(X; Y) = \sup I(X(\phi_1), X(\phi_2), \dots, X(\phi_m); Y),
$$
where the supremum is over all possible $n \in \mathbb{N}$ and all possible testing functions $\phi_1, \phi_2, \dots, \phi_m$.

For the channel (\ref{ct}) with the input as in (\ref{non-feedback-case}) or (\ref{feedback-case}) , it is known that its mutual information over $[0, T]$ can be computed as (see, e.g., ~\cite{pi64,ih93}):
\begin{equation} \label{definition-mutual-information}
I(M; Y_0^T)=\begin{cases}
\EX\left[\log \frac{d \mu_{M, Y_0^T}}{d \mu_{M} \times \mu_{Y_0^T}}(M, Y_0^T)\right], &\mbox{ if } \frac{d \mu_{M, {Y_0^T}}}{d \mu_{M} \times \mu_{Y_0^T}} \mbox{ exists },\\
\infty, &\mbox{ otherwise },
\end{cases}
\end{equation}
where $d \mu_{M, Y_0^T}/d \mu_M \times \mu_{Y_0^T}$ denotes the Radon-Nikodym derivative of $\mu_{M, Y_0^T}$ with respect to $\mu_M \times \mu_{Y_0^T}$.

\section{The Non-Feedback Case}

In this section, we examine the AWGN channel (\ref{ct}) for the non-feedback case and give quantitative strengthenings of (\ref{limit-version}), detailed in the following theorem.
\begin{thm} \label{non-feedback-theorem}
For the continuous-time AWGN channel (\ref{ct}), suppose that the channel input $\{X(t)\}$ is a stationary stochastic process with power spectral density function $f(\cdot)$~\footnote{More precisely, the channel input $\{X(t): t \geq 0\}$ can be extended to a bi-infinite stationary stochastic process $\{X(t): -\infty < t < \infty\}$ with power spectral density function $f(\cdot)$.}. Then, the following two statements hold:
\begin{enumerate}
\item[(a)] Suppose $\int f(\lambda) |\lambda| d\lambda < \infty$. Then, for any $T$ and $n$,
$$
\sqrt{I(X_0^T; Y_0^T)} \leq \frac{\sqrt{2 T \delta_{T, n} \int f(\lambda) |\lambda| d \lambda}+\sqrt{2 T \delta_{T, n} \int f(\lambda) |\lambda| d\lambda + 4 I(X_0^T; Y(\Delta_{T, n}))}}{2}.
$$
\item[(b)]
Suppose $\int f(\lambda) |\lambda| d\lambda < \infty$ and $\int f(\lambda) d\lambda < \infty$. Then, for any $T$ and $n$,
$$
I(X_0^T; Y_0^T) - I(X_0^T; Y(\Delta_{T, n})) \leq  T \sqrt{\delta_{T, n}} \left(\int f(\lambda) |\lambda| d\lambda \right)^{1/2} \left(\int f(\lambda) d\lambda \right)^{1/2}.
$$
Consequently, for any $T$, choosing $n = n(T)$ such that $\lim_{T \to \infty} \delta_{T, n(T)}= 0$, we have
$$
\frac{I(X_0^T; Y_0^T)}{T} - \frac{I(X_0^T; Y(\Delta_{T, n(T)}))}{T} =  O\left(\sqrt{\delta_{T, n(T)}}\right),
$$
as $T$ tends to infinity.
\end{enumerate}
\end{thm}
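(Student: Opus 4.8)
The displayed asymptotic is immediate from the inequality in part (b): dividing it by $T$ gives $\tfrac1T I(X_0^T;Y_0^T)-\tfrac1T I(X_0^T;Y(\Delta_{T,n}))\le \sqrt{\delta_{T,n}}\,(\int f(\lambda)|\lambda|\,d\lambda)^{1/2}(\int f(\lambda)\,d\lambda)^{1/2}$, whose right‑hand side is $O(\sqrt{\delta_{T,n(T)}})$ because the two integrals are finite constants not depending on $T$. So the task is to prove the part‑(b) inequality, and my plan is to prove the sharper, cleaner bound
\[
I(X_0^T;Y_0^T)-I(X_0^T;Y(\Delta_{T,n}))\le \tfrac12\int_0^T \mathbb{E}\big[(X(s)-\bar X(s))^2\big]\,ds ,
\]
where $\bar X(s)\triangleq \delta_{T,n}^{-1}\int_{t_{i-1}}^{t_i}X(r)\,dr$ for $s\in[t_{i-1},t_i)$ is the interval average, and then to bound the right‑hand side spectrally. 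For the latter, stationarity gives $\mathbb{E}[(X(t)-X(s))^2]=2\int f(\lambda)(1-\cos\lambda(t-s))\,d\lambda\le 2|t-s|\int f(\lambda)|\lambda|\,d\lambda$ (using $1-\cos\theta\le|\theta|$), whence by Jensen's inequality $\mathbb{E}[(X(s)-\bar X(s))^2]\le \delta_{T,n}\int f(\lambda)|\lambda|\,d\lambda$ for every $s$, so $\int_0^T\mathbb{E}[(X(s)-\bar X(s))^2]\,ds\le T\delta_{T,n}\int f(\lambda)|\lambda|\,d\lambda$; note only $\int f(\lambda)|\lambda|\,d\lambda<\infty$ is needed here. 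The stated part‑(b) bound (and the quadratic bound in part (a)) then follows a fortiori, or via the variant discussed in the last paragraph.

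\textbf{Brownian‑bridge decomposition.} On each cell $[t_{i-1},t_i]$ I would write the output path as (the linear interpolant of $Y$ through the endpoints $Y(t_{i-1}),Y(t_i)$) plus the residual $V_i(t)=G_i(t)+\beta_i(t)$, where $G_i(t)\triangleq\int_{t_{i-1}}^t(X(r)-\bar X(r))\,dr$ is a path vanishing at both cell endpoints with $G_i'=X-\bar X$, and $\beta_i$ is the analogous residual of the Brownian motion, which is a standard Brownian bridge on the cell; the bridges $\beta_1,\dots,\beta_n$ are jointly independent and independent of $\sigma(X_0^T,Y(\Delta_{T,n}))$. Since $Y(t_i)$ already lies in $Y(\Delta_{T,n})$, one has $\sigma(Y_0^T)=\sigma(Y(\Delta_{T,n}),V_1,\dots,V_n)$, so the chain rule gives $I(X_0^T;Y_0^T)-I(X_0^T;Y(\Delta_{T,n}))=I(X_0^T;V_1,\dots,V_n\mid Y(\Delta_{T,n}))$. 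Because the $V_i$ are conditionally independent given $(X_0^T,Y(\Delta_{T,n}))$, a short chain‑rule argument gives the subadditivity $I(X_0^T;V_1,\dots,V_n\mid Y(\Delta_{T,n}))\le\sum_i I(X_0^T;V_i\mid Y(\Delta_{T,n}))$; and the Cameron–Martin mutual‑information bound for a signal in Brownian‑bridge noise, $I(S;S+\beta_i\mid\cdot)\le\tfrac12\,\mathbb{E}[\|S\|_{H_i}^2\mid\cdot]$ with $\|G_i\|_{H_i}^2=\int_{t_{i-1}}^{t_i}(G_i')^2=\int_{t_{i-1}}^{t_i}(X-\bar X)^2$, bounds the $i$‑th term by $\tfrac12\int_{t_{i-1}}^{t_i}\mathbb{E}[(X(t)-\bar X(t))^2]\,dt$. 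Summing over $i$ yields the boxed bound, which with the spectral estimate proves part (b) and hence the asymptotic.

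\textbf{Main obstacle.} The substantive work is entirely in the bridge step. The residuals $V_i$ are path‑valued, so the Cameron–Martin inequality and the conditional‑information manipulations must be justified through finite‑dimensional projections (or the general definition (\ref{general-definition})); one must verify that conditioning on $Y(\Delta_{T,n})$ leaves the $\beta_i$ independent standard bridges (this is where $Y(t_i)=\bar Y(t_i)$ and the independent‑increment structure of Brownian motion enter) and that $\mathbb{E}\|G_i\|_{H_i}^2<\infty$, which is guaranteed by the spectral estimate even when $\int f(\lambda)\,d\lambda=\infty$. These are the delicate points; the combinatorial/information skeleton above is robust.

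\textbf{An alternative, more classical route reproducing the exact forms in (a)--(b).} Apply Duncan's mutual‑information/causal‑MMSE identity to both channels: $I(X_0^T;Y_0^T)=\tfrac12\int_0^T\mathbb{E}[(X(s)-\mathbb{E}[X(s)\mid Y_0^s])^2]\,ds$, and, after identifying $I(X_0^T;Y(\Delta_{T,n}))$ with the mutual information of the averaged continuous channel $\bar Y(t)=\int_0^t\bar X+B$ (legitimate since $X_0^T\to\bar X_0^T\to Y(\Delta_{T,n})$ is Markov, $Y(t_i)=\bar Y(t_i)$, and the Girsanov density of $\bar Y_0^T$ given $\bar X_0^T$ depends on the path only through the sampled coordinates), $I(X_0^T;Y(\Delta_{T,n}))=\tfrac12\int_0^T\mathbb{E}[(\bar X(s)-\mathbb{E}[\bar X(s)\mid \bar Y_0^s])^2]\,ds$. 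Writing $p(s),q(s)$ for these two conditional standard deviations, $I(X_0^T;Y_0^T)-I(X_0^T;Y(\Delta_{T,n}))=\tfrac12\int_0^T(p-q)(p+q)\,ds$, and Cauchy–Schwarz together with $\int_0^T(p+q)^2\,ds\le 8\,I(X_0^T;Y_0^T)$ (from $\int p^2=2I(X_0^T;Y_0^T)$, $\int q^2=2I(X_0^T;Y(\Delta_{T,n}))$ and data processing) reduces part (a) to the estimate $\int_0^T(p(s)-q(s))^2\,ds\le T\delta_{T,n}\int f(\lambda)|\lambda|\,d\lambda$; part (b) then follows from part (a) and the a priori bound $I(X_0^T;Y_0^T)\le\tfrac12 T\int f(\lambda)\,d\lambda$ (conditional MMSE $\le$ prior variance). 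Here the obstacle is this last estimate: $p$ and $q$ are conditional standard deviations for the two filtrations $\{\sigma(Y_0^s)\}$ and $\{\sigma(\bar Y_0^s)\}$, which coincide only at the sampling instants, so bounding $|p(s)-q(s)|$ pointwise requires quantifying the cost of replacing the fine observation by the coarse one — which is exactly what the bridge decomposition of the previous paragraph supplies.
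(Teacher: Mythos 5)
Your proposal is essentially correct, but it takes a genuinely different route from the paper. The paper works entirely through the (SNR-parameterized) I-MMSE relationship: both $I(X_0^T;Y_0^T)$ and $I(X_0^T;Y(\Delta_{T,n}))$ are written as integrals over $snr\in[0,1]$ of non-causal MMSEs, the discrete-time conditional expectation is replaced by the (better) continuous-time one, and the resulting difference is split into two sums each controlled by Cauchy--Schwarz against $\int_0^T\E[R^2[X(s);Z_0^T]]\,ds$. Because that last quantity is itself $2\,I(X_0^T;Y_0^T)$, the paper obtains the self-referential inequality $I(X_0^T;Y_0^T)-I(X_0^T;Y(\Delta_{T,n}))\le\sqrt{2T\delta_{T,n}\,I(X_0^T;Y_0^T)\int f(\lambda)|\lambda|\,d\lambda}$, from which (a) follows by solving a quadratic and (b) by the a priori bound $I(X_0^T;Y_0^T)\le\frac{T}{2}\int f(\lambda)\,d\lambda$. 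Your bridge decomposition instead computes the information loss directly as $I(X_0^T;V_1,\dots,V_n\mid Y(\Delta_{T,n}))$ and bounds it by subadditivity plus the Cameron--Martin relative-entropy bound, yielding $\frac12\int_0^T\E[(X(s)-\bar X(s))^2]\,ds\le\frac12 T\delta_{T,n}\int f(\lambda)|\lambda|\,d\lambda$. This is a genuinely sharper estimate --- $O(\delta_{T,n})$ rather than the paper's $O(\sqrt{\delta_{T,n}})$ --- and it avoids the self-referential step entirely; it does imply both (a) and (b), and your second, Duncan-identity route is the one closest in spirit to what the paper actually does. What the paper's approach buys in exchange is that all the analysis happens at the level of (finite-dimensional) MMSE estimates, with no path-space measure theory; your route requires the Kolmogorov chain rule for path-valued conditional mutual information, the identification of the conditional bridge laws, and the Cameron--Martin formula on each cell, all of which you correctly flag as the delicate points but do not carry out.

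Two small cautions. First, the claim that the exact part-(b) inequality follows ``a fortiori'' from your $O(\delta_{T,n})$ bound is not literal: $\frac12 T\delta\int f|\lambda|\le T\sqrt{\delta}\,(\int f|\lambda|)^{1/2}(\int f)^{1/2}$ requires $\delta\int f|\lambda|\le 4\int f$, which need not hold. You recover the stated form by combining your bound with the trivial one, e.g.\ via
\begin{equation*}
I(X_0^T;Y_0^T)-I(X_0^T;Y(\Delta_{T,n}))\le\min\left\{\tfrac12 T\delta_{T,n}\textstyle\int f|\lambda|\,d\lambda,\ \tfrac12 T\textstyle\int f\,d\lambda\right\}\le\tfrac12 T\sqrt{\delta_{T,n}}\left(\textstyle\int f|\lambda|\,d\lambda\right)^{1/2}\left(\textstyle\int f\,d\lambda\right)^{1/2},
\end{equation*}
using $\min\{u,v\}\le\sqrt{uv}$. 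Second, the deduction of (a) from your bound also needs one line: with $a=T\delta_{T,n}\int f|\lambda|\,d\lambda$ and $D=I(X_0^T;Y_0^T)-I(X_0^T;Y(\Delta_{T,n}))$, your $D\le a/2$ gives $D\le\sqrt{2aD}\le\sqrt{2a\,I(X_0^T;Y_0^T)}$, which is exactly the paper's intermediate inequality and hence yields the quadratic form in (a).
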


\begin{proof}

Consider the following parameterized version of the channel (\ref{ct}):
\begin{equation} \label{snr-ct}
Z(t) = \sqrt{snr} \int_0^t X(s) ds + B(t), \quad t \in [0, T],
\end{equation}
where the parameter $snr > 0$ can be regarded as the signal-to-noise ratio of the channel. Obviously, when $snr$ is fixed to be $1$, $Z(t) = Y(t)$ for any $t \in [0, T]$, and moreover, the channel (\ref{snr-ct}) is exactly the same as the channel (\ref{ct}).

Sampling the channel (\ref{snr-ct}) with respect to sampling times $t_{0}, t_{1}, \dots, t_{n}$ that are evenly spaced over $[0, T]$, we obtain the following discrete-time Gaussian channel:
\begin{equation} \label{snr-dt}
Z(t_{i})-Z(t_{i-1})=\sqrt{snr} \int_{t_{i-1}}^{t_{i}} X(s) ds + B(t_{i}) - B(t_{i-1}), \quad i = 1, 2, \dots, n,
\end{equation}
which can be ``normalized'' as follows:
\begin{equation} \label{normalized-snr-dt}
\frac{Z(t_{i})-Z(t_{i-1})}{\sqrt{\delta_{T, n}}} = \sqrt{snr}  \frac{\int_{t_{i-1}}^{t_{i}} X(s) ds}{\sqrt{\delta_{T, n}}} + \frac{B(t_{i}) - B(t_{ i-1})}{\sqrt{\delta_{T, n}}}, \quad i = 1, 2, \dots, n,
\end{equation}
where, at each time $i$, the channel noise $\frac{B(t_{i}) - B(t_{i-1})}{\sqrt{\delta_{T, n}}}$ is a standard Gaussian random variable, $\frac{\int_{t_{ i-1}}^{t_{i}} X(s) ds}{\sqrt{\delta_{T, n}}}$ and $\frac{Z(t_{i})-Z(t_{i-1})}{\sqrt{\delta_{T, n}}}$ should be regarded as the channel input and output, respectively.

By the continuous-time I-MMSE relationship~\cite{gu05} applied to the channel (\ref{snr-ct}), the mutual information of the channel (\ref{ct}) can be computed as
$$
I(X_0^T; Y_0^T) =\frac{1}{2} \int_0^1 \int_0^T \E[\left(X(s)-\E[X(s)|Z_0^T]\right)^2] ds d snr.
$$
And by the discrete-time I-MMSE relationship~\cite{gu05} (or more precisely, its extension~\cite{HanSong2016} to Gaussian memory channels) applied to the channel (\ref{normalized-snr-dt}), we have
$$
I(X_0^T; Y(\Delta_{T, n})) = \frac{1}{2} \int_0^1 \sum_{i=1}^n \E\left[\left(\frac{\int_{t_{i-1}}^{t_{i}} X(s) ds}{\sqrt{\delta_{T, n}}}-\E\left[\left.\frac{\int_{t_{i-1}}^{t_{i}} X(s) ds}{\sqrt{\delta_{T, n}}}\right|Z(\Delta_{T, n})\right]\right)^2\right] d snr.
$$

Obviously, by (\ref{I<=I}), it holds true that $I(X_0^T; Y_0^T) \geq I(X_0^T; Y(\Delta_{T, n}))$. In the following, we will give an upper bound on their difference $I(X_0^T; Y_0^T) - I(X_0^T; Y(\Delta_{T, n}))$, thereby characterizing the closeness between the two quantities. Towards this goal, using the following easily verified that for each $i$,
$$
\E\left[\left(\int_{t_{i-1}}^{t_i} X(s) - \E[X(s)|Z_0^T] \; ds \right)^2\right] \leq \E\left[\left(\int_{t_{i-1}}^{t_i} X(s) - \E[X(s)|Z(\Delta_{T, n})] \; ds \right)^2\right],
$$
we first note that
\begin{align*}
&I(X_0^T; Y_0^T) - I(X_0^T; Y(\Delta_{T, n}))\\
&= \frac{1}{2} \int_0^1 \int_0^T \E[\left(X(s)-\E[X(s)|Z_0^T]\right)^2] ds - \sum_{i=1}^n \E\left[\left(\int_{t_{i-1}}^{t_i} \frac{X(s)-\E[X(s)|Z(\Delta_{T, n})]}{\sqrt{\delta_{T, n}}} ds \right)^2 \right] d snr\\
&\leq \frac{1}{2} \int_0^1 \int_0^T \E[\left(X(s)-\E[X(s)|Z_0^T]\right)^2] ds - \sum_{i=1}^n \E\left[\left(\int_{t_{i-1}}^{t_i} \frac{X(s)-\E[X(s)|Z_0^T]}{\sqrt{\delta_{T, n}}} ds \right)^2 \right] d snr\\
&= \frac{1}{2} \int_0^1 \int_0^T \E[R^2[X(s); Z_0^T]] ds - \sum_{i=1}^n \E\left[\left(\int_{t_{i-1}}^{t_i} \frac{R[X(s); Z_0^T]}{\sqrt{\delta_{T, n}}} ds \right)^2 \right] d snr\\
& = \frac{1}{2} (S_1 + S_2)
\end{align*}
where we have used the shorthand notation $R[X(s); Z_0^T]$ for $X(s)-\E[X(s)|Z_0^T]$ and
{\small $$
S_1 \triangleq \sum_{i=1}^n \int_0^1 \int_{t_{i-1}}^{t_i} \E[R^2[X(s); Z_0^T]] ds  - \sum_{i=1}^n \int_{t_{i-1}}^{t_i} \E\left[R[X(s); Z_0^T] R[X(t_{i-1}); Z_0^T] \right] ds d snr,
$$}
{\small $$
S_2  \triangleq \sum_{i=1}^n \int_0^1 \E\left[\left(\int_{t_{i-1}}^{t_i} R[X(s); Z_0^T] ds \right) R[X(t_{i-1}); Z_0^T] \right] - \sum_{i=1}^n \E\left[\left(\int_{t_{i-1}}^{t_i} \frac{R[X(s); Z_0^T]}{\sqrt{\delta_{T, n}}} ds \right)^2 \right] d snr.
$$}

For the first term, we have
\begin{align}
S_1^2 &= \left(\sum_{i=1}^n \int_{t_{i-1}}^{t_i} \int_0^1 \E[R^2[X(s); Z_0^T]] dsnr ds - \sum_{i=1}^n  \int_{t_{i-1}}^{t_i} \int_0^1 \E\left[R[X(s); Z_0^T] R[X(t_{i-1}); Z_0^T] \right] dsnr ds \right)^2 \nonumber \\
&= \left(\sum_{i=1}^n \int_{t_{i-1}}^{t_i} \int_0^1 \E[R[X(s); Z_0^T] R[X(s)-X(t_{i-1}); Z_0^T]] dsnr ds \right)^2 \nonumber \\
&\leq n \sum_{i=1}^n \left(\int_{t_{i-1}}^{t_i} \int_0^1 \E[R[X(s); Z_0^T] R[X(s)-X(t_{i-1}); Z_0^T]] dsnr ds\right)^2 \nonumber \\
&\leq n \sum_{i=1}^n \int_{t_{i-1}}^{t_i} \int_0^1 \E[R^2[X(s); Z_0^T]] dsnr ds \int_{t_{i-1}}^{t_i} \int_0^1 \E[R^2[X(s)-X(t_{i-1}); Z_0^T]] dsnr ds, \label{r1}
\end{align}
where we have used the Cauchy-Scharz inequality for the last inequality. Now, noticing the fact that
$$
\E[R^2[X(s)-X(t_{i-1}); Z_0^T]] \leq \E[(X(s)-X(t_{i-1}))^2],
$$
we continue as follows:
\begin{align}
S_1^2&\leq n \sum_{i=1}^n \int_{t_{i-1}}^{t_i} \int_0^1 \E[R^2[X(s); Z_0^T]] dsnr ds \int_{t_{i-1}}^{t_i} \int_0^1 \E[(X(s)-X(t_{i-1}))^2] dsnr ds \label{r2}\\
&= n \sum_{i=1}^n \int_{t_{i-1}}^{t_i} \int_0^1 \E[R^2[X(s); Z_0^T]] dsnr ds \int_{t_{i-1}}^{t_i} \int_0^1 \E[X^2(s)+X^2(t_{i-1})-2 X(s) X(t_{i-1})] dsnr ds. \label{r3}
\end{align}
Now, using the fact that, for any $u, v \in \mathbb{R}$,
$$
\E[X(u) X(u+v)] = \int f(\lambda) e^{i v \lambda} d\lambda,
$$
we have
\begin{align}
S_1^2 &\leq n \sum_{i=1}^n \int_{t_{i-1}}^{t_i} \int_0^1 \E[R^2[X(s); Z_0^T]] dsnr ds \int_{t_{i-1}}^{t_i} \int_0^1 \int 2 f(\lambda) |1 - e^{i \lambda (s-t_{i-1})}| d\lambda dsnr ds \label{r4}\\
&\leq n \sum_{i=1}^n \int_{t_{i-1}}^{t_i} \int_0^1 \E[R^2[X(s); Z_0^T]] dsnr ds \int_{t_{i-1}}^{t_i} \int_0^1 \int 2 f(\lambda) |\lambda| (s-t_{i-1}) d\lambda dsnr ds \label{r5} \\
&\leq n \delta_{T, n}^2 \int f(\lambda) |\lambda| d\lambda \int_0^1 \sum_{i=1}^n \int_{t_{i-1}}^{t_i} \E[R^2[X(s); Z_0^T]] ds dsnr \label{r6}\\
&= T \delta_{T, n} \int f(\lambda) |\lambda| d\lambda \int_0^1 \int_0^T \E[(X(s)-\E[X(s)|Z_0^T])^2] ds dsnr \label{r7}\\
&= 2 T \delta_{T, n} \; I(X_0^T; Y_0^T) \int f(\lambda) |\lambda| d\lambda. \label{r8}
\end{align}

For the second term, we have
$$
S_2^2 = \left(\sum_{i=1}^n \int_{t_{i-1}}^{t_i} \int_0^1 \E \left[R[X(s); Z_0^T] \int_{t_{i-1}}^{t_i} \frac{R[X(t_{i-1})-X(u); Z_0^T]}{\delta_{T, n}} du \right] dsnr ds \right)^2.
$$
Starting from this and proceeding in a similar fashion as in (\ref{r1})-(\ref{r8}), we derive
$$
S_2^2 \leq 2 T \delta_{T, n} I(X_0^T; Y_0^T) \int f(\lambda) |\lambda| d\lambda.
$$
Combining the bounds on $S_1$ and $S_2$, we have
\begin{equation} \label{before-two-results}
I(X_0^T; Y_0^T) - I(X_0^T; Y(\Delta_{T, n})) \leq \sqrt{2 T \delta_{T, n}} \left(I(X_0^T; Y_0^T) \int f(\lambda) |\lambda| d\lambda \right)^{1/2}.
\end{equation}
It then immediately follows that
$$
\sqrt{I(X_0^T; Y_0^T)} \leq \frac{\sqrt{2 T \delta_{T, n} \int f(\lambda) |\lambda| d \lambda}+\sqrt{2 T \delta_{T, n} \int f(\lambda) |\lambda| d\lambda + 4 I(X_0^T; Y_{t_0}^{t_n})}}{2},
$$
establishing ($a$). Moreover, together with the fact that
$$
I(X_0^T; Y_0^T) \leq \frac{1}{2} \int_0^T \E[X^2(s)] ds = \frac{T}{2} \int f(\lambda) d\lambda,
$$
the inequality (\ref{before-two-results}) implies that
$$
\left|I(X_0^T; Y_0^T) - I(X_0^T; Y(\Delta_{T, n})) \right| \leq  T \sqrt{\delta_{T, n}} \left(\int f(\lambda) |\lambda| d\lambda \right)^{1/2} \left(\int f(\lambda) d\lambda \right)^{1/2},
$$
establishing ($b$).
\end{proof}

The following corollary, which immediately follows from Theorem~\ref{non-feedback-theorem}, characterizes, among others, how over-sampling approaches the true mutual information of the channel (\ref{ct}) with bandwidth limit.
\begin{co} \label{non-feedback-corollary}
For the continuous-time AWGN channel (\ref{ct}), suppose that the channel input has bandwidth limit $W$ and average power $P$, more precisely, $f(\lambda) = 0$ for any $\lambda \in (-\infty, -W] \cup [W, \infty)$ and $\E[X^2(s)] \leq P$ for any $s \geq 0$. Then, the following two statements hold:
\begin{enumerate}
\item[(a)] For any $T$ and $n$,
$$
\sqrt{I(X_0^T; Y_0^T)} \leq \frac{\sqrt{2 T P W \delta_{T, n}}+\sqrt{2 T P W \delta_{T, n} + 4 I(X_0^T; Y(\Delta_{T, n}))}}{2}.
$$
\item[(b)]
For any $T$ and $n$,
$$
I(X_0^T; Y_0^T) - I(X_0^T; Y(\Delta_{T, n})) \leq  T P \sqrt{W \delta_{T, n}}.
$$
Consequently, for each $T$, choosing $n = n(T)$ such that $\lim_{T \to \infty} \delta_{T, n(T)}= 0$, we have
$$
\frac{I(X_0^T; Y_0^T)}{T} - \frac{I(X_0^T; Y(\Delta_{T, n}))}{T} = O\left(\sqrt{\delta_{T, n(T)}}\right),
$$
as $T$ tends to infinity.
\end{enumerate}
\end{co}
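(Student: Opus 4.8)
The plan is to obtain both statements as direct specializations of Theorem~\ref{non-feedback-theorem}, so the only actual work is to translate the band-limitedness and power constraint into bounds on the two spectral integrals $\int f(\lambda)\,d\lambda$ and $\int f(\lambda)|\lambda|\,d\lambda$ that appear in the theorem.

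First I would record two elementary estimates. Putting $v = 0$ in the spectral identity $\E[X(u)X(u+v)] = \int f(\lambda) e^{iv\lambda}\,d\lambda$ (the same identity invoked in the proof of Theorem~\ref{non-feedback-theorem}) gives $\int f(\lambda)\,d\lambda = \E[X^2(s)] \le P$; in particular $\int f(\lambda)\,d\lambda < \infty$. Since $f(\lambda) = 0$ for $|\lambda| \ge W$, on the support of $f$ we have $|\lambda| \le W$, hence $\int f(\lambda)|\lambda|\,d\lambda \le W \int f(\lambda)\,d\lambda \le PW < \infty$. Thus the integrability hypotheses of both parts of Theorem~\ref{non-feedback-theorem} are satisfied.

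For part (a), I would note that the right-hand side of the bound in Theorem~\ref{non-feedback-theorem}(a) is nondecreasing in the quantity $\int f(\lambda)|\lambda|\,d\lambda$: both $\sqrt{2T\delta_{T,n}\int f(\lambda)|\lambda|\,d\lambda}$ and $\sqrt{2T\delta_{T,n}\int f(\lambda)|\lambda|\,d\lambda + 4 I(X_0^T; Y(\Delta_{T,n}))}$ increase with it, since $I(X_0^T; Y(\Delta_{T,n})) \ge 0$ and all prefactors are nonnegative. Replacing $\int f(\lambda)|\lambda|\,d\lambda$ by the larger value $PW$ therefore only weakens the inequality and yields exactly the stated bound. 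For part (b), I would substitute $\int f(\lambda)|\lambda|\,d\lambda \le PW$ and $\int f(\lambda)\,d\lambda \le P$ into the bound of Theorem~\ref{non-feedback-theorem}(b), getting
$$
T\sqrt{\delta_{T,n}} \left(\int f(\lambda)|\lambda|\,d\lambda\right)^{1/2}\left(\int f(\lambda)\,d\lambda\right)^{1/2} \le T\sqrt{\delta_{T,n}}\,(PW)^{1/2}P^{1/2} = T P \sqrt{W\delta_{T,n}}.
$$
Dividing through by $T$ and using that $P$ and $W$ are fixed constants gives $\frac{I(X_0^T; Y_0^T)}{T} - \frac{I(X_0^T; Y(\Delta_{T,n(T)}))}{T} \le P\sqrt{W\delta_{T,n(T)}} = O(\sqrt{\delta_{T,n(T)}})$ as $T \to \infty$, whenever $\delta_{T,n(T)} \to 0$.

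There is essentially no obstacle: the corollary is a plain substitution. The only point warranting a line of care is the monotonicity argument in part (a), but this is immediate from the nonnegativity of $I(X_0^T; Y(\Delta_{T,n}))$ and of all the constants involved, so the replacement of $\int f(\lambda)|\lambda|\,d\lambda$ by $PW$ is justified.
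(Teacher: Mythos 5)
Your proposal is correct and matches the paper's intent exactly: the paper states that the corollary ``immediately follows'' from Theorem~\ref{non-feedback-theorem}, and your substitutions $\int f(\lambda)\,d\lambda = \E[X^2(s)] \le P$ and $\int f(\lambda)|\lambda|\,d\lambda \le W\int f(\lambda)\,d\lambda \le PW$, together with the monotonicity observation for part (a), are precisely the omitted routine steps.
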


\begin{rem} \label{extension-feedback}
The I-MMSE relationship has played an important role in deriving our results for non-feedback AWGN channels. This powerful tool has been extended~\cite{HanSong2016} to feedback AWGN channels in both discrete and continuous time. A natural question is whether the extended relationship can help us to derive counterpart results to Theorem~\ref{non-feedback-theorem} for the feedback case. As tempting and promising as this idea may look, we failed to find a way to effectively apply the extended I-MMSE relationship, and our treatment for the feedback case, to be detailed in the next section, will use other tools and techniques from the theory of stochastic calculus. Here, we remark that the formulas derived in~\cite{HanSong2016} are valid only with some extra assumption imposed and yet fail to hold true in general. For more detailed explanations and corrected formulas, see Arxiv:1401.3527.
\end{rem}

\section{The Feedback Case} \label{extensions}

In this section, we give quantitative strengthenings of (\ref{limit-version}) for the AWGN channel (\ref{ct}) in the feedback case, which we remind the reader is characterized by the stochastic differential equation in (\ref{SDE}).

The following regularity conditions may be imposed:
\begin{itemize}
\item[(a)] The solution $\{Y(t)\}$ to the stochastic differential equation (\ref{SDE}) uniquely exists.
\item[(b)]
$$
\PX\left(\int_0^T g^2(t, M, Y_0^t) dt < \infty \right)=\PX\left(\int_0^T g^2(t, M, B_0^t) dt < \infty \right)=1.
$$
\item[(c)]
$$
\int_0^T \EX[|g(t, M, Y_0^t)|] dt < \infty.
$$
\item[(d)] \textbf{The uniform Lipschitz condition:} There exists a constant $L > 0$ such that for any $0 \leq s_1, s_2, t_1, t_2 \leq T$, any $Y_0^T, Z_0^T$,
$$
|g(s_1, M, Y_0^{s_2})-g(t_1, M, Z_0^{t_2})| \leq L (|s_1-t_1|+ \|Y_{0}^{s_2}- Z_0^{t_2}\|).
$$

\item[(e)] \textbf{The uniform linear growth condition:} There exists a constant $L > 0$ such that for any $M$ and any $Y_0^T$,
$$
|g(t, M, Y_0^t)| \leq L (1+\|Y_0^t\|).
$$
\end{itemize}

We will need the following lemma, which has already been established~\cite{LiuHan2019} in a slightly more general setting.
\begin{lem} \label{improved-liptser-1}
Assume Conditions (d)-(e). Then, there exists a unique strong solution of (\ref{SDE}) with initial value $Y(0)=0$. Moreover, there exists $\varepsilon > 0$ such that
\begin{equation} \label{exponential-finiteness-1}
\EX [e^{\varepsilon \|Y_0^T\|^2}] < \infty,
\end{equation}
which immediately implies Conditions (b) and (c).
\end{lem}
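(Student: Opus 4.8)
The plan is to establish the two claims of Lemma~\ref{improved-liptser-1} --- existence/uniqueness of a strong solution and the exponential integrability bound~\eqref{exponential-finiteness-1} --- as consequences of the uniform Lipschitz condition (d) and the uniform linear growth condition (e), following the classical It\^o--Liptser--Shiryaev machinery for SDEs with path-dependent coefficients. For existence and uniqueness, I would first observe that condition (d), applied with $s_1 = t_1 = t$, yields the functional Lipschitz estimate $|g(t, M, Y_0^t) - g(t, M, Z_0^t)| \leq L \|Y_0^t - Z_0^t\|$, which is exactly the hypothesis needed for the standard Picard iteration argument on $C[0,T]$: define $Y^{(0)}(t) \equiv 0$ and $Y^{(k+1)}(t) = \int_0^t g(s, M, (Y^{(k)})_0^s)\, ds + B(t)$, then show via Gronwall that the iterates form a Cauchy sequence in $L^2(\Omega; C[0,T])$ (using Doob's maximal inequality is not even needed here since the noise term $B(t)$ is fixed across iterations and the drift is an ordinary Lebesgue integral). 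Pathwise uniqueness then follows from the same Gronwall estimate applied to the difference of two solutions. This part is essentially bookkeeping once (d) is in hand; the linear growth condition (e) guarantees the iterates do not blow up and the solution is defined on all of $[0,T]$.

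The substantive step is the exponential moment bound~\eqref{exponential-finiteness-1}. I would start from the integral equation $Y(t) = \int_0^t g(s, M, Y_0^s)\, ds + B(t)$ and apply (e) to get, for $0 \le t \le T$,
$$
|Y(t)| \leq \int_0^t L(1 + \|Y_0^s\|)\, ds + \|B_0^T\| \leq LT + \|B_0^T\| + L \int_0^t \|Y_0^s\|\, ds.
$$
Taking the supremum over $t$ (the right side is nondecreasing in $t$ after replacing $|Y(t)|$ inside the integral by $\|Y_0^s\|$, which only increases it), Gronwall's inequality gives a pathwise bound of the form $\|Y_0^T\| \leq (LT + \|B_0^T\|)\, e^{LT}$. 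Hence $\|Y_0^T\|^2 \leq C_1 + C_2 \|B_0^T\|^2$ for explicit constants $C_1, C_2$ depending only on $L$ and $T$, and therefore $\EX[e^{\varepsilon \|Y_0^T\|^2}] \leq e^{\varepsilon C_1}\, \EX[e^{\varepsilon C_2 \|B_0^T\|^2}]$. The latter is finite for $\varepsilon$ small enough because $\|B_0^T\| = \sup_{0 \le s \le T}|B(s)|$ has Gaussian tails: by the reflection principle $\PX(\|B_0^T\| > x) \leq 4\, \PX(B(T) > x) \le 2 e^{-x^2/(2T)}$, so $\EX[e^{\gamma \|B_0^T\|^2}] < \infty$ whenever $\gamma < 1/(2T)$. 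Choosing $\varepsilon < 1/(2 C_2 T)$ finishes it.

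Finally, that~\eqref{exponential-finiteness-1} implies Conditions (b) and (c) is quick: exponential integrability of $\|Y_0^T\|^2$ gives $\EX[\|Y_0^T\|^2] < \infty$, and then by (e), $\int_0^T \EX[|g(t,M,Y_0^t)|]\,dt \leq LT(1 + \EX[\|Y_0^T\|]) < \infty$, which is (c); similarly $\int_0^T g^2(t,M,Y_0^t)\,dt \leq L^2 T(1 + \|Y_0^T\|)^2 < \infty$ almost surely, and the analogous bound with $B$ in place of $Y$ uses that $\|B_0^T\| < \infty$ a.s., giving (b). The main obstacle, such as it is, is not any single deep step but rather being careful with the path-dependence in the Gronwall argument --- one must take suprema in the right order so that the functional $\|Y_0^s\|$ (rather than the pointwise $|Y(s)|$) appears under the integral sign --- and confirming that condition (d) as stated, with its two separate time arguments, does specialize to the functional Lipschitz property that the Picard scheme requires. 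Since the lemma is quoted from~\cite{LiuHan2019} as already proved in a more general setting, I would for this paper either cite that reference directly or include the short self-contained argument sketched above.
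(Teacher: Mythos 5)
Your proposal is correct, but there is nothing in the paper to compare it against: the paper does not prove this lemma at all, stating only that it ``has already been established in~\cite{LiuHan2019} in a slightly more general setting.'' Your self-contained sketch is the standard argument one would expect behind that citation, and it is sound: since the noise enters additively and the drift is an ordinary Lebesgue integral, the Picard iteration and the Gronwall bound $\|Y_0^T\| \leq (LT + \|B_0^T\|)e^{LT}$ are purely pathwise (no stochastic integrals, hence no It\^o isometry or Doob inequality needed), and the exponential moment then reduces to the Gaussian tail of $\sup_{0 \leq s \leq T}|B(s)|$ via the reflection principle, with the deduction of Conditions (b) and (c) from the linear growth condition being immediate. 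The one point worth stating explicitly if you write this out is that Gronwall is applied to the a.s.\ finite continuous function $t \mapsto \|Y_0^t\|$, which is legitimate because the solution is constructed as a continuous path; with that noted, either citing~\cite{LiuHan2019} or including your short argument would be acceptable.
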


We will also need the following lemma, whose proof is deferred to Section~\ref{proof-square-exponential}.
\begin{lem} \label{square-exponential}
Let $Z_{max} = \max \{Z_1, Z_2, \dots, Z_n\}$, where the i.i.d. random variables $Z_i \sim N(0, 1/n)$. Then, we have:
\begin{itemize}
\item[a)] For any $0 < \varepsilon < 1$, $\EX[Z_{max}^2] = O(n^{-(1-\varepsilon)})$.
\item[b)] For any $0 < \varepsilon < 1$, $\EX[Z_{max}^4] = O(n^{-(2-\varepsilon)})$.
\item[b)] For any $0 < \varepsilon < 1$, $\EX[e^{Z_{max}^2}] = 1 + O(n^{-(1-\varepsilon)})$.
\end{itemize}
\end{lem}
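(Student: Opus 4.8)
The plan is to reduce everything to sharp tail bounds on $Z_{max}$ and then integrate. Write $Z_i = W_i/\sqrt{n}$ with $W_i \sim N(0,1)$ i.i.d., so $Z_{max} = W_{max}/\sqrt{n}$ where $W_{max} = \max_i W_i$. The classical estimate for the maximum of $n$ standard Gaussians gives $W_{max} = \sqrt{2\log n}\,(1+o(1))$ in probability, and more importantly a non-asymptotic bound: by a union bound and the Gaussian tail inequality $\PX(W_i > u) \leq e^{-u^2/2}$, we get $\PX(W_{max} > u) \leq n e^{-u^2/2}$ for all $u > 0$, while trivially $\PX(W_{max} > u) \leq 1$. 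Thus for $Z_{max}^2 = W_{max}^2/n$ we have, for any $a \geq 0$,
\begin{equation} \label{tailbound}
\PX(Z_{max}^2 > a) = \PX(W_{max}^2 > na) \leq \min\{1,\; n\, e^{-na/2}\}.
\end{equation}

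First I would prove (a). Using $\EX[Z_{max}^2] = \int_0^\infty \PX(Z_{max}^2 > a)\,da$ and splitting the integral at the threshold $a_0 = a_0(n)$ where the two bounds in \eqref{tailbound} cross, namely $n e^{-na_0/2} = 1$, i.e.\ $a_0 = (2\log n)/n$, one gets $\EX[Z_{max}^2] \leq a_0 + \int_{a_0}^\infty n e^{-na/2}\,da = (2\log n)/n + (2/n)\cdot n e^{-na_0/2}\cdot(1/n) \cdot n = (2\log n)/n + 2/n$. Since $\log n = O(n^{\varepsilon})$ for every $\varepsilon>0$, this is $O(n^{-(1-\varepsilon)})$. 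Part (b) (the fourth moment) is identical in structure: $\EX[Z_{max}^4] = \int_0^\infty \PX(Z_{max}^2 > \sqrt{b})\,db$, split at $b_0 = a_0^2 = (2\log n/n)^2$, and the tail integral $\int_{b_0}^\infty n e^{-n\sqrt{b}/2}\,db$ is handled by the substitution $u = \sqrt{b}$, giving a bound of order $b_0 + (\text{lower order}) = O((\log n)^2/n^2) = O(n^{-(2-\varepsilon)})$.

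For the third item, $\EX[e^{Z_{max}^2}] = 1 + \int_0^\infty e^{a}\,\PX(Z_{max}^2 > a)\,da$ (again by the layer-cake formula applied to $e^{Z_{max}^2} - 1$). Split at the same $a_0 = (2\log n)/n$: on $[0,a_0]$ use $\PX \leq 1$ to get $\int_0^{a_0} e^a\,da = e^{a_0}-1 = O(a_0) = O(n^{-(1-\varepsilon)})$ since $a_0 \to 0$; on $[a_0,\infty)$ use $\PX(Z_{max}^2 > a) \leq n e^{-na/2}$ to get $\int_{a_0}^\infty n e^{a} e^{-na/2}\,da = n \int_{a_0}^\infty e^{-(n/2 - 1)a}\,da = \frac{n}{n/2-1} e^{-(n/2-1)a_0}$, and since $e^{-na_0/2} = 1/n$ this is $\frac{n}{n/2-1}\cdot \frac{e^{a_0}}{n} = O(1/n)$. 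Adding the two pieces gives $1 + O(n^{-(1-\varepsilon)})$.

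The only delicate point — and the one I would be most careful about — is the interplay between the exponential weight $e^a$ and the exponential tail $e^{-na/2}$ in part (c): the tail integral converges only because $n/2 > 1$ for $n \geq 3$, so the statement should implicitly be read for $n$ large (which is the regime of interest anyway), and one must check that the constant $n/(n/2-1) \to 2$ stays bounded rather than blowing up. Everything else is the routine layer-cake-plus-split-the-integral computation; no concentration machinery beyond the elementary Gaussian tail bound and the union bound is needed, and the $\varepsilon$ in the exponents is purely there to absorb the logarithmic factors $\log n$ and $(\log n)^2$.
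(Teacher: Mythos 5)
Your proof is correct and takes essentially the same route as the paper's: the layer-cake formula, the elementary Gaussian tail bound plus a union bound, and a split of the integral at a threshold (you split at $a_0=2\log n/n$ where the paper splits at $n^{-a}$, which is immaterial), and you additionally write out the exponential-moment computation for the third item, which the paper omits entirely. One line needs patching: the event $\{W_{max}^2>na\}$ also contains $\{W_{max}<-\sqrt{na}\}$, i.e.\ all $n$ standard Gaussians lying below $-\sqrt{na}$, which your union bound on $\PX(W_{max}>u)$ does not cover; this event has probability at most $\left(\tfrac12 e^{-na/2}\right)^n=2^{-n}e^{-n^2a/2}$ (the paper bounds it explicitly), which is dominated by $n e^{-na/2}$, so your stated tail bound and all subsequent estimates remain valid.
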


As detailed below, the following theorem gives the aforementioned quantitative strengthening of (\ref{limit-version}). Here, we mention that with the channel input as in (\ref{SDE}), the discrete-time channel (\ref{dt}) obtained by sampling the channel (\ref{ct}) over $[0, T]$ with respect to $\Delta_{T, n}$ takes the following form:
\begin{equation}  \label{after-sampling-with-feedback}
Y(t_i)=\int_0^{t_i} g(s, M, Y_0^s) ds + B(t_i), \quad i=0, 1, \ldots, n.
\end{equation}
\begin{thm} \label{feedback-theorem}
Fix $T > 0$ and assume Conditions (d)-(e). Then, for any $0 < \varepsilon < 1/2$, we have
$$
I(M; Y_0^T) = I(M; Y(\Delta_{T, n})) + O(\delta_{\Delta_{T, n}}^{1/2-\varepsilon}),
$$
where we recall from Section~\ref{notations} that $Y(\Delta_{T, n}) = \{Y(t_0), Y(t_1), \ldots, Y(t_n)\}$.
\end{thm}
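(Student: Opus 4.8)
The plan is to interpose, between the continuous-time channel~(\ref{SDE}) and its sampled version~(\ref{after-sampling-with-feedback}), an auxiliary ``frozen-drift'' channel whose continuous-time and sampled mutual informations coincide, and then to control the two resulting perturbations. I would start from the Girsanov representation: under Conditions~(d)--(e), Lemma~\ref{improved-liptser-1} gives existence and uniqueness of $\{Y(t)\}$ and $\mu_{Y_0^T|M=m}\sim\mu_{B_0^T}$, with density $L_m(\omega)=\exp\bigl(\int_0^T g(s,m,\omega_0^s)\,d\omega(s)-\tfrac12\int_0^T g^2(s,m,\omega_0^s)\,ds\bigr)$. Writing $I(M;Y_0^T)=\E[\log L_M(Y)]-\E\bigl[\log\E_{M'}[L_{M'}(Y)]\bigr]$ with $M'$ an independent copy of $M$, and substituting $dY=g(s,M,Y_0^s)\,ds+dB$, the first term collapses to $\E[\log L_M(Y)]=\tfrac12\E\int_0^T g^2(s,M,Y_0^s)\,ds$ (the It\^o integral being a genuine martingale by Lemma~\ref{improved-liptser-1}), so $I(M;Y_0^T)=\tfrac12\E\int_0^T g^2(s,M,Y_0^s)\,ds-D(\mu_{Y_0^T}\|\mu_{B_0^T})$; the same identity, restricted to the $\sigma$-algebra generated by $\{Y(t_0),\dots,Y(t_n)\}$, decomposes $I(M;Y(\Delta_{T,n}))$ in the same way.

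Next I would define the proxy $Y^{(n)}$ by the stochastic differential equation obtained from~(\ref{SDE}) by holding the drift on each $(t_{i-1},t_i]$ equal to $g\bigl(t_{i-1},M,\hat Y^{(n),t_{i-1}}_0\bigr)$, where $\hat Y^{(n)}$ denotes the piecewise-linear interpolation of $Y^{(n)}(t_0),\dots,Y^{(n)}(t_i)$ through the sampling times; set $\eta_n(s)=t_{i-1}$ for $s\in(t_{i-1},t_i]$ and $\tilde g_s=g(\eta_n(s),M,\hat Y^{(n),\eta_n(s)}_0)$. Because the drift is constant on each sub-interval, one has $Y^{(n)}(t)=\hat Y^{(n)}(t)+\bigl(B(t)-B(t_{i-1})-\tfrac{t-t_{i-1}}{\delta_{T,n}}(B(t_i)-B(t_{i-1}))\bigr)$ on $(t_{i-1},t_i]$; that is, $Y^{(n)}_0^T=\hat Y^{(n)}_0^T$ plus concatenated Brownian bridges that are independent of $\bigl(M,Y^{(n)}(\Delta_{T,n})\bigr)$, whence $I(M;Y^{(n)}_0^T)=I(M;Y^{(n)}(\Delta_{T,n}))$. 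It therefore suffices to prove $\bigl|I(M;Y_0^T)-I(M;Y^{(n)}_0^T)\bigr|=O(\delta_{T,n}^{1/2-\varepsilon})$ and $\bigl|I(M;Y(\Delta_{T,n}))-I(M;Y^{(n)}(\Delta_{T,n}))\bigr|=O(\delta_{T,n}^{1/2-\varepsilon})$, and then to conclude by the triangle inequality.

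For the pathwise estimate I would couple $Y$ and $Y^{(n)}$ on one space (same $B$, same $M$). Conditions~(d)--(e), Lemma~\ref{improved-liptser-1} applied to both processes, and a Gronwall argument bound $\E\|Y_0^T-Y^{(n),T}_0\|^2$ --- and hence the drift gap $\int_0^T\E\bigl[(g(s,M,Y_0^s)-\tilde g_s)^2\bigr]\,ds$ --- by an $O(\delta_{T,n}^2)$ term (from $|s-\eta_n(s)|\le\delta_{T,n}$) plus the maximal sub-interval oscillations $\E\bigl[\max_{1\le i\le n}\sup_{[t_{i-1},t_i]}|B(t)-B(t_{i-1})|^2\bigr]$ and the analogous bridge version; by the reflection principle these have the tail of $|N(0,\delta_{T,n})|$, so parts~(a)--(b) of Lemma~\ref{square-exponential} (after rescaling variance from $1/n$ to $\delta_{T,n}$) bound them by $O(\delta_{T,n}^{1-\varepsilon})$. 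Thus the drift gap is $O(\delta_{T,n}^{1-\varepsilon})$.

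It remains to turn this into the information bounds. In $I=\tfrac12\E\int(\text{drift})^2-D(\cdot\|\mu_B)$, the difference of the first terms for $Y$ and $Y^{(n)}$ is $\tfrac12\E\int(g-\tilde g)(g+\tilde g)\,ds=O(\delta_{T,n}^{1/2-\varepsilon})$ by Cauchy--Schwarz, the drift gap, and the uniform-in-$n$ second moments from~(e) and Lemma~\ref{improved-liptser-1}. For the relative-entropy terms I would use $D(\mu_{Y_0^T}\|\mu_{B_0^T})-D(\mu_{Y^{(n)}_0^T}\|\mu_{B_0^T})=D(\mu_{Y_0^T}\|\mu_{Y^{(n)}_0^T})+\bigl(\E_{\mu_{Y_0^T}}-\E_{\mu_{Y^{(n)}_0^T}}\bigr)\bigl[\log\tfrac{d\mu_{Y^{(n)}_0^T}}{d\mu_{B_0^T}}\bigr]$: by Girsanov and joint convexity of relative entropy the first summand is at most $\tfrac12\E\int(g-\tilde g)^2\,ds=O(\delta_{T,n}^{1-\varepsilon})$, and the second is at most $\bigl\|\tfrac{d\mu_{Y_0^T}}{d\mu_{Y^{(n)}_0^T}}-1\bigr\|_{L^2(\mu_{Y^{(n)}_0^T})}$ times a uniform bound (from Girsanov plus~(e) and Lemma~\ref{improved-liptser-1}) on $\bigl\|\log\tfrac{d\mu_{Y^{(n)}_0^T}}{d\mu_{B_0^T}}\bigr\|_{L^2}$. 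The $L^2$-norm of the likelihood-ratio defect is $\chi^2(\mu_{Y_0^T}\|\mu_{Y^{(n)}_0^T})^{1/2}$, and bounding $\chi^2$ by $O(\delta_{T,n}^{1-\varepsilon})$ requires an exponential moment $\E\exp\bigl(c\int_0^T(g-\tilde g)^2\,ds\bigr)=1+O(\delta_{T,n}^{1-\varepsilon})$, which is precisely what part~(c) of Lemma~\ref{square-exponential} delivers once $\int_0^T(g-\tilde g)^2\,ds$ is dominated (via~(d)) by the squared maximal bridge oscillation; \emph{this last bound, kept uniform in $n$ across a genuinely infinite-dimensional change of measure, is the step I expect to be the main obstacle}. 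The sampled-channel difference is handled identically, since relative entropy and $\chi^2$-divergence both obey the data-processing inequality under the sampling map, so the sampled estimates descend from the continuous-time ones. Collecting the two $O(\delta_{T,n}^{1/2-\varepsilon})$ bounds and relabelling $\varepsilon$ gives $I(M;Y_0^T)=I(M;Y(\Delta_{T,n}))+O(\delta_{\Delta_{T,n}}^{1/2-\varepsilon})$.
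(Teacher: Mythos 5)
Your plan is sound in outline and arrives at the right rate, but it takes a genuinely different route from the paper. The paper never introduces an auxiliary process: it works with the actual solution $Y$, writes both $I(M;Y_0^T)$ and $I(M;Y(\Delta_{T,n}))$ via the Girsanov exponent $\rho_1$ and its conditional expectations given $\sigma(Y(\Delta_{T,n}),M)$ (resp.\ $\sigma(Y(\Delta_{T,n}))$), and compares $\rho_1$ with a surrogate $\rho_2$ built from the piecewise-flat drift evaluated on the piecewise-linear interpolation of the \emph{observed} samples --- the whole point being that $\rho_2$ is $\sigma(Y(\Delta_{T,n}),M)$-measurable, so the conditional expectation acts trivially on it. You instead interpose a second SDE $Y^{(n)}$ with frozen drift and exploit the exact identity $I(M;Y^{(n)}{}_0^T)=I(M;Y^{(n)}(\Delta_{T,n}))$ (your Brownian-bridge independence argument for this is correct and is an attractive structural observation the paper does not make), then control two perturbations via the decomposition $I=\tfrac12\E\int(\text{drift})^2-D(\cdot\|\mu_B)$ and the identity $D(\mu_Y\|\mu_B)-D(\mu_{Y^{(n)}}\|\mu_B)=D(\mu_Y\|\mu_{Y^{(n)}})+(\E_{\mu_Y}-\E_{\mu_{Y^{(n)}}})[\log\tfrac{d\mu_{Y^{(n)}}}{d\mu_B}]$. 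The quantitative inputs are ultimately the same as the paper's: the drift gap $\E\int_0^T(g-\bar g_{\Delta_n})^2\,ds=O(\delta_{T,n}^{1-\eps})$ from the Lipschitz condition plus the maximal-oscillation bounds of Lemma~\ref{square-exponential}~$a)$--$b)$, and the exponential moment $\E[e^{c\int(g-\bar g_{\Delta_n})^2\,ds}]=1+O(\delta_{T,n}^{1-\eps})$ from Lemma~\ref{square-exponential}~$c)$, which is exactly the ``main obstacle'' you flag and is carried out in the paper's Step~2 (via Cauchy--Schwarz and Fatou applied to the exponential supermartingale). What your route buys is a cleaner conceptual picture (a channel for which sampling is lossless, plus two stability estimates); what it costs is extra bookkeeping that your sketch underplays: (i) passing the $\chi^2$ bound from the conditional laws $\mu_{Y|M=m}$ to the mixtures needs joint convexity of the $\chi^2$ $f$-divergence; (ii) the uniform-in-$n$ $L^2$ bound on $\log\tfrac{d\mu_{Y^{(n)}}}{d\mu_B}$, which the paper sidesteps by bounding $\log\EX[e^{\rho_1-\rho_2}\,|\,\cdot\,]$ from above and below by Jensen, must be proved and involves the conditional-mean drift $\EX[g|Y^{(n)}{}_0^s]$; and (iii) the sampled-channel leg cannot be dispatched by a bare appeal to data processing --- you need the same $D$-difference identity applied to the pushforward measures $\mu_{Y(\Delta_{T,n})|M}$, with data processing used only on the $D(\mu_{Y(\Delta_{T,n})|M}\|\mu_{Y^{(n)}(\Delta_{T,n})|M})$ and $\chi^2$ terms. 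None of these is a fatal gap, but each requires an argument before the plan becomes a proof.
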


\begin{proof}

First of all, recall from Section~\ref{notations} that for a stochastic process $\{X(s)\}$ and any $t \in \mathbb{R}_+$, we use $\mu_{X_0^t}$ to denote the distribution on $C[0, t]$ induced by $X_0^t$. Throughout the proof, we only have to deal with the case $t=T$, and so we will simply write $\mu_{X_0^T}$ as $\mu_X$. Moreover, we will rewrite $\Delta_{T, n}$ as $\Delta_n$ and $\delta_{T, n}$ as $\delta_n$ for notational simplicity.

Note that an application of Theorem $7.14$ of~\cite{li01} with Conditions ($b$) and ($c$) yields that
\begin{equation} \label{crazy}
\PX\left(\int_0^T \EX^2[g(t, M, Y_0^t)|Y_0^t] dt < \infty \right)=1.
\end{equation}
Then one verifies that the assumptions of Lemma $7.7$ of~\cite{li01} are all satisfied (this lemma is stated under very general assumptions, which are exactly Conditions (b), (c) and (\ref{crazy}) when restricted to our settings), which implies that for any $m$,
\begin{equation}  \label{two-tilde}
\mu_Y \sim \mu_{Y|M=m} \sim \mu_B,
\end{equation}
and moreover, with probability $1$,
{\small \begin{equation}  \label{RN-1}
\frac{d\mu_{Y|M}}{d\mu_B}(Y_0^T|M)=\frac{1}{\EX[e^{\rho_1(M, Y_0^T)}|Y_0^T, M]}, \quad \frac{d\mu_{Y}}{d\mu_B}(Y_0^T)=\frac{1}{\EX[e^{\rho_1(M, Y_0^T)}|Y_0^T]},
\end{equation}}
where
$$
\rho_1(m, Y_0^T) \triangleq -\int_0^T g(s, m, Y_0^s) dY(s)+\frac{1}{2} \int_0^T g(s, m, Y_0^s)^2 ds.
$$
Here we note that $\EX[e^{\rho_1(M, Y_0^T)}|Y_0^T, M]$ is in fact equal to $e^{\rho_1(M, Y_0^T)}$, but we keep it the way it is as above for an easy comparison.

Note that it follows from $\EX[d\mu_B/d\mu_Y(Y_0^T)]=1$ that $\EX[e^{\rho_1(M, Y_0^T)}]=1$, which is equivalent to
\begin{equation}  \label{weak-Novikov}
\EX[e^{-\int_0^T g(s, M, Y_0^s) dB(s)-\frac{1}{2} \int_0^T g(s, M, Y_0^s)^2 ds}]=1.
\end{equation}
Then, a parallel argument as in the proof of Theorem $7.1$ of~\cite{li01} (which requires the condition (\ref{weak-Novikov})) further implies that, for any $n$,
{\small \begin{equation}  \label{RN-2}
\hspace{-1.0cm} \frac{d\mu_{Y(\Delta_n)|M}}{d\mu_{B(\Delta_n)}}(Y(\Delta_n)|M) = \frac{1}{\EX[e^{\rho_1(M, Y_0^T)}|Y(\Delta_n), M]}, \quad \frac{d\mu_{Y(\Delta_n)}}{d\mu_{B(\Delta_n)}}(Y(\Delta_n))=\frac{1}{\EX[e^{\rho_1(M, Y_0^T)}|Y(\Delta_n)]}, ~~\mbox{a.s.}.
\end{equation}}

Next, by the definition of mutual information, we have
\begin{align}
I(M; Y(\Delta_n)) &= \EX\left[\log f_{Y(\Delta_n)|M}(Y(\Delta_n)|M)\right]-\EX\left[\log f_{Y(\Delta_n)}(Y(\Delta_n))\right] \nonumber\\
& = \EX\left[\log \frac{d\mu_{Y(\Delta_n)|M}}{d\mu_{B(\Delta_n)}}(Y(\Delta_n)|M)\right]-\EX\left[\log \frac{d\mu_{Y(\Delta_n)}}{d\mu_{B(\Delta_n)}}(Y(\Delta_n))\right], \label{l-3}
\end{align}
and
\begin{align}
I(M; Y_0^T) & = \EX\left[\log \frac{d \mu_{M, Y_0^T}}{d \mu_{M} \times \mu_{Y_0^T}}(M, Y_0^T)\right] \nonumber\\
& =\EX\left[\log \frac{d\mu_{Y|M}}{d\mu_{B}}(Y_0^T|M)\right]-\EX\left[\log \frac{d\mu_{Y}}{d\mu_B}(Y_0^T)\right], \label{l-4}
\end{align}
where the well-definedness of the Radon-Nikodym derivatives are guaranteed by (\ref{two-tilde}).

By (\ref{l-3}), (\ref{RN-1}) and (\ref{RN-2}), we have
{\small \begin{align*}
I(M; Y(\Delta_n)) &= \EX\left[\log \frac{d\mu_{Y(\Delta_n)|M}}{d\mu_{B(\Delta_n)}}(Y(\Delta_n)|M)\right]-\EX\left[\log \frac{d\mu_{Y(\Delta_n)}}{d\mu_{B(\Delta_n)}}(Y(\Delta_n))\right]\\
&= -\EX[\log \EX[e^{\rho_1(M, Y_0^T)}|Y(\Delta_n), M]] + \EX[\log \EX[e^{\rho_1(M, Y_0^T)}|Y(\Delta_n)]],
\end{align*}}
and
{\small \begin{align*}
I(M; Y_0^T) &= \EX\left[\log \frac{d\mu_{Y|M}}{d\mu_{B}}(Y_0^T|M)\right]-\EX\left[\log \frac{d\mu_{Y}}{d\mu_{B}}(Y_0^T)\right]\\
&= -\EX[\log \EX[e^{\rho_1(M, Y_0^T)}|Y_0^T, M]] + \EX[\log \EX[e^{\rho_1(M, Y_0^T)}|Y_0^T]]\\
&= -\rho_1(M, Y_0^T) + \EX[\log \EX[e^{\rho_1(M, Y_0^T)}|Y_0^T]].
\end{align*}}

We next proceed in the following two steps.

{\bf Step $\bf 1$.} In this step, we show that for any $0 < \varepsilon < 1/2$,
$$
\EX\left[\log \frac{d\mu_{Y|M}}{d\mu_{B}}(Y_0^T|M)\right] - \EX\left[\log \frac{d\mu_{Y(\Delta_n)|M}}{d\mu_{B(\Delta_n)}}(Y(\Delta_n)|M)\right] = O(\delta^{1/2-\varepsilon}_{n}).
$$
Apparently, it suffices to show that for any $0 < \varepsilon < 1/2$,
\begin{equation} \label{Fn-Conv}
\EX[|\log \EX[e^{\rho_1(M, Y_0^T)}|Y(\Delta_n), M] - \rho_1(M, Y_0^T)|] = O(\delta^{1/2-\varepsilon}_{n}).
\end{equation}

Let $\bar{Y}_{\Delta_n, 0}^T$ denote the piecewise linear version of $Y_0^T$ with respect to $\Delta_n$; more precisely, for any $i=0, 1, \dots, n$, let $\bar{Y}_{\Delta_n}(t_{i})=Y(t_{i})$, and for any $t_{i-1} < s < t_{i}$ with $s=\lambda t_{i-1}+(1-\lambda) t_{i}$ for some $0 < \lambda < 1$, let $\bar{Y}_{\Delta_n}(s)=\lambda Y(t_{i-1})+(1-\lambda) Y(t_{i})$. Let $\bar{g}_{\Delta_n}(s, M, \bar{Y}_{\Delta_n, 0}^s)$ denote the piecewise ``flat'' version of $g(s, M, \bar{Y}_{\Delta_n, 0}^s)$ with respect to $\Delta_n$; more precisely, for any $t_{i-1} \leq s < t_{i}$, $\bar{g}_{\Delta_n}(s, M, \bar{Y}_{\Delta_n, 0}^s)=g(t_{i-1}, M, \bar{Y}_{\Delta_n, 0}^{t_{i-1}})$.

Letting
$$
\rho_2(\Delta_n, m, Y_0^T) \triangleq -\int_0^T \bar{g}_{\Delta_n}(s, m, \bar{Y}_{\Delta_n, 0}^s) dY(s)+\frac{1}{2} \int_0^T \bar{g}^2_{\Delta_n}(s, m, \bar{Y}_{\Delta_n, 0}^s) ds,
$$
we have
\begin{align*}
\log \EX[e^{\rho_1(M, Y_0^T)}|Y(\Delta_n), M] & = \log \EX[e^{\rho_2(\Delta_n, M, Y_0^T) +\rho_1(M, Y_0^T)-\rho_2(\Delta_n, M, Y_0^T)}|Y(\Delta_n), M]\\
    & = \log e^{\rho_2(\Delta_n, M, Y_0^T)} \EX[e^{\rho_1(M, Y_0^T)-\rho_2(\Delta_n, M, Y_0^T)}|Y(\Delta_n), M]\\
    & = \rho_2(\Delta_n, M, Y_0^T) +\log \EX[e^{\rho_1(M, Y_0^T)-\rho_2(\Delta_n, M, Y_0^T)}|Y(\Delta_n), M],
\end{align*}
where we have used the fact that
\begin{equation} \label{rho-2-adapted}
\EX[e^{\rho_2(\Delta_n, M, Y_0^T)}|Y(\Delta_n), M]=e^{\rho_2(\Delta_n, M, Y_0^T)},
\end{equation}
since $\rho_2(\Delta_n, M, Y_0^T)$ only depends on $M$ and $Y(\Delta_n)$.

We now prove the following rate of convergence:
\begin{equation} \label{conv-1}
\EX\left[(\rho_1(M, Y_0^T)-\rho_2(\Delta_n, M, Y_0^T))^2\right] = O(\delta_n^{1-\varepsilon}),
\end{equation}
where $0 < \varepsilon < 1$. To this end, we note that
\begin{equation} \label{rho1-rho2}
\rho_1(M, Y_0^T)-\rho_2(\Delta_n, M, Y_0^T) = -\int_0^T (g(s)-\bar{g}_{\Delta_n}(s)) dB(s) - \frac{1}{2} \int_0^T (g(s)-\bar{g}_{\Delta_n}(s))^2 ds,
\end{equation}
where we have rewritten $g(s, M, Y_0^s)$ as $g(s)$, and $\bar{g}_{\Delta_n}(s, M, \bar{Y}_{\Delta_n, 0}^s)$ as $\bar{g}_{\Delta_n}(s)$. It then follows that (\ref{conv-1}) boils down to
\begin{equation} \label{conv-1a}
\EX\left[\left(-\int_0^T (g(s)-\bar{g}_{\Delta_n}(s)) dB(s) - \frac{1}{2} \int_0^T (g(s)-\bar{g}_{\Delta_n}(s))^2 ds \right)^2 \right] = O(\delta_n^{1-\varepsilon}).
\end{equation}
To establish (\ref{conv-1a}), notice that, by the It\^{o} isometry~\cite{ok95}, we have
$$
\EX\left[\left(\int_0^T (g(s)-\bar{g}_{\Delta_n}(s)) dB(s) \right)^2\right] = \EX\left[\int_0^T (g(s)-\bar{g}_{\Delta_n}(s))^2 ds \right].
$$
We next prove that for any $0 < \varepsilon < 1$,
\begin{equation} \label{conv-1b}
\EX\left[\int_0^T (g(s)-\bar{g}_{\Delta_n}(s))^2 ds\right] = O(\delta_n^{1-\varepsilon}).
\end{equation}
To see this, we note that, by Conditions ($d$) and ($e$), there exists $L_1 > 0$ such that for any $s \in [0, T]$ with $t_{i-1} \leq s < t_{i}$,
\begin{align}
\hspace{-1cm} |g(s, M, \bar{Y}_{\Delta_n, 0}^s)-\bar{g}_{\Delta_n}(s, M, \bar{Y}_{\Delta_n, 0}^s)| & = |g(s, M, \bar{Y}_{\Delta_n, 0}^s)-g(t_{i-1}, M, \bar{Y}_{\Delta_n, 0}^{t_{i-1}})| \nonumber\\
& \leq L_1 (|s-t_{i-1}| +\|\bar{Y}_{\Delta_n, 0}^s-\bar{Y}_{\Delta_n, 0}^{t_{i-1}}\|) \nonumber\\
& \leq L_1 (|s-t_{i-1}| + |Y(t_{i})-Y(t_{i-1})|) \nonumber\\
& \leq  L_1 \delta_n+ L_1 \delta_n+L_1 \delta_n \|Y_0^T\|+|B(t_{i})-B(t_{i-1})|. \label{diff-2}
\end{align}
Similarly, there exists $L_2 > 0$ such that for any $s \in [0, T]$,
\begin{align*}
& \hspace{-1cm} |g(s, M, Y_{0}^s)-g(s, M, \bar{Y}_{\Delta_n, 0}^s)| \\
& \leq L_2 \delta_n+L_2 \delta_n \|Y_0^T\|+\max_i \max_{s \in [t_{i-1}, t_i]} \max\{|B(s)-B(t_{i-1})|, |B(s)-B(t_{i})|\}\\
& \leq L_2 \delta_n+L_2 \delta_n \|Y_0^T\| + \max_i \left(\max\{\max_{s \in [t_{i-1}, t_i]} (B(s)-B(t_{i-1})), - \min_{s \in [t_{i-1}, t_i]} (B(s)-B(t_{i-1})), \right.\\
& \hspace{6cm} \left. \max_{s \in [t_{i-1}, t_i]} (B(s)-B(t_{i})), - \min_{s \in [t_{i-1}, t_i]} (B(s)-B(t_{i})) \} \right)\\
& \leq L_2 \delta_n+L_2 \delta_n \|Y_0^T\| + \max_i \left\{\max_{s \in [t_{i-1}, t_i]} (B(s)-B(t_{i-1})) \right\} + \max_i \left\{- \min_{s \in [t_{i-1}, t_i]} (B(s)-B(t_{i-1})) \right\}\\
& \hspace{5cm} + \max_i \left\{\max_{s \in [t_{i-1}, t_i]} (B(s)-B(t_{i})) \right\} + \max_i \left\{- \min_{s \in [t_{i-1}, t_i]} (B(s)-B(t_{i})) \right\}.
\end{align*}
It is well-known (see, e.g., Theorem $2.21$ in~\cite{MortersPeres2010}) that $\max_{s \in [t_{i-1}, t_i]} (B(s)-B(t_{i-1}))$ is distributed as $|B(t_{i})-B(t_{i-1})|$, which, together with the fact that $\{B(t)\}$ has independent increments, leads to
\begin{align*}
& \hspace{-2cm} \E\left[\left(\max_i \left\{\max_{s \in [t_{i-1}, t_i]} (B(s)-B(t_{i-1})) \right\}\right)^2\right] = \E\left[\left(\max_i \left\{|B(t_{i})-B(t_{i-1})| \right\} \right)^2 \right] \\
& \leq \E\left[\left(\max_i \left\{B(t_{i})-B(t_{i-1}) \right\} \right)^2 \right] + \E\left[\left(\min_i \left\{B(t_{i})-B(t_{i-1}) \right\} \right)^2 \right].
\end{align*}
Now, applying Lemma~\ref{square-exponential}~$a)$, we conclude that for any $0 < \varepsilon < 1$,
$$
\E\left[\left(\max_i \left\{\max_{s \in [t_{i-1}, t_i]} (B(s)-B(t_{i-1})) \right\}\right)^2\right] = O(\delta_n^{1-\varepsilon}).
$$
And a completely parallel argument yields that for any $0 < \varepsilon < 1$,
$$
\E\left[\left(\max_i \left\{-\min_{s \in [t_{i-1}, t_i]} (B(s)-B(t_{i-1})) \right\}\right)^2\right] = O(\delta_n^{1-\varepsilon}),
$$
and moreover,
$$
\E\left[\left(\max_i \left\{-\min_{s \in [t_{i-1}, t_i]} (B(s)-B(t_{i})) \right\}\right)^2\right] = O(\delta_n^{1-\varepsilon}),
$$
$$
\E\left[\left(\max_i \left\{-\min_{s \in [t_{i-1}, t_i]} (B(s)-B(t_{i})) \right\}\right)^2\right] = O(\delta_n^{1-\varepsilon}),
$$
where for the latter two, we have used, in addition, the well-known fact that the time reversed Brownian motion is still a Brownian motion. Noticing that, by Lemma~\ref{improved-liptser-1}, $\|Y_0^T\|^2$ is integrable, we arrive at (\ref{conv-1b}), up to an arbitrarily small $\varepsilon > 0$.

A similar argument as above, coupled with Lemma~\ref{square-exponential}~$b)$ (rather than Lemma~\ref{square-exponential}~$a)$), will yield that for any $0 < \varepsilon < 1$,
\begin{equation} \label{conv-1f}
\EX\left[\left(\int_0^T (g(s)-\bar{g}_{\Delta_n}(s))^2 ds \right)^2\right] = O(\delta_n^{2-\varepsilon}),
\end{equation}
which, together with (\ref{conv-1a}) and (\ref{conv-1b}), implies (\ref{conv-1}), as desired.

We now prove the following rate of convergence:
\begin{equation} \label{conv-2}
\EX[\log \EX[e^{\rho_1(M, Y_0^T)-\rho_2(\Delta_n, M, Y_0^T)}|Y(\Delta_n), M]] = O(\delta_n^{1-\varepsilon}),
\end{equation}
where $0 < \varepsilon < 1$. To this end, we first note that
\begin{align*}
\EX[\log \EX[e^{\rho_1(M, Y_0^T)-\rho_2(\Delta_n, M, Y_0^T)}|Y(\Delta_n), M]] & \leq \log \EX[\EX[e^{\rho_1(M, Y_0^T)-\rho_2(\Delta_n, M, Y_0^T)}|Y(\Delta_n), M]] \\
&= \log \EX[e^{\rho_1(M, Y_0^T)-\rho_2(\Delta_n, M, Y_0^T)}]\\
&\stackrel{(a)}{=} \log \EX[e^{-\int_0^T (g(s)-\bar{g}_{\Delta_n}(s)) dB(s) - \frac{1}{2} \int_0^T (g(s)-\bar{g}_{\Delta_n}(s))^2 ds}]\\
&\leq 0,
\end{align*}
where for ($a$), we have used (\ref{rho1-rho2}), and for the last inequality, we have used the fact that
$$
\EX[e^{-\int_0^T (g(s)-\bar{g}_{\Delta_n}(s)) dB(s) - \frac{1}{2} \int_0^T (g(s)-\bar{g}_{\Delta_n}(s))^2 ds}] \leq 1,
$$
which is a well-known fact that follows from Fatou's lemma. For another direction, we have
\begin{align*}
\EX[\log \EX[e^{\rho_1(M, Y_0^T)-\rho_2(\Delta_n, M, Y_0^T)}|Y(\Delta_n), M]] & \geq \EX[\EX[\log e^{\rho_1(M, Y_0^T)-\rho_2(\Delta_n, M, Y_0^T)}|Y(\Delta_n), M]] \\
&= \EX[\EX[\rho_1(M, Y_0^T)-\rho_2(\Delta_n, M, Y_0^T)|Y(\Delta_n), M]] \\
&= \EX[\rho_1(M, Y_0^T)-\rho_2(\Delta_n, M, Y_0^T)]\\
&= - \frac{1}{2} \int_0^T \EX[(g(s)-\bar{g}_{\Delta_n}(s))^2] ds,
\end{align*}
which, together with (\ref{conv-1b}), leads to (\ref{conv-2}).

{\bf Step $\bf 2$.} In this step, we will prove that for any $0 < \varepsilon < 1/2$, there exists a constant $C > 0$ such that for all $n$,
\begin{equation} \label{Gn-Conv}
-\EX\left[\log \frac{d\mu_{Y}}{d\mu_B}(Y_0^T)\right] + \EX\left[\log \frac{d\mu_{Y(\Delta_n)}}{d\mu_{B(\Delta_n)}}(Y(\Delta_n))\right] \leq C \delta_n^{1/2-\varepsilon}.
\end{equation}
First of all, note that by Theorem $6.2.2$ in~\cite{ih93}, we have,
$$
\frac{d \mu_{Y}}{d\mu_B}(Y_0^T) = e^{\int_0^T \hat{g}(Y_0^s) dY(s)-\frac{1}{2} \int_0^T \hat{g}^2(Y_0^s) ds},
$$
where $\hat{g}(Y_0^s)=\EX[g(s, M, Y_0^s)|Y_0^s]$. Moreover, by Theorem $7.23$ of~\cite{li01}, we have,
$$
\frac{d \mu_{Y}}{d\mu_B}(Y_0^T)  = \int \frac{d \mu_{Y|M}}{d\mu_B}(Y_0^T|m) d\mu_M(m),
$$
where
$$
\frac{d \mu_{Y|M}}{d\mu_B}(Y_0^T|m) = e^{\int_0^T g(s, m, Y_0^s) dY(s)-\frac{1}{2} \int_0^T g^2(s, m, Y_0^s) ds}.
$$
Similarly, we have
\begin{align*}
\frac{d\mu_{Y(\Delta_n)}}{d\mu_{B(\Delta_n)}}(Y(\Delta_n)) &= \int \frac{d\mu_{Y(\Delta_n)|M}}{d\mu_{B(\Delta_n)}}(Y(\Delta_n)|m) d\mu_M(m)\\
                                      &= \int \frac{1}{\EX[e^{\rho_1(M, Y_0^T)}|Y(\Delta_n), m]} d\mu_M(m).
\end{align*}
It then follows that
\begin{align*}
\hspace{-1cm} -\EX\left[\log \frac{d\mu_{Y}}{d\mu_B}(Y_0^T)\right] + \EX\left[\log \frac{d\mu_{Y(\Delta_n)}}{d\mu_{B(\Delta_n)}}(Y(\Delta_n))\right] & = -\EX\left[\log e^{-\hat{\rho}_1(Y_0^T)}\right] + \EX \left[ \log \int \frac{1}{\EX[e^{\rho_1(M, Y_0^T)}|Y(\Delta_n), m]} d\mu_M(m) \right]\\
& = \EX \left[ \log \int \frac{e^{\hat{\rho}_1(Y_0^T)-\rho_2(\Delta_n, m, Y_0^T)}}{\EX[e^{\rho_1(M, Y_0^T)-\rho_2(\Delta_n, M, Y_0^T)}|Y(\Delta_n), m]} d\mu_M(m) \right],
\end{align*}
where we have used the shorthand notation $\hat{\rho}_1(Y_0^T)$ for $-\int_0^T \hat{g}(Y_0^s) dY(s)+\frac{1}{2} \int_0^T \hat{g}^2(Y_0^s) ds$, and we have used (\ref{rho-2-adapted}) in deriving the last equality. It then follows that
\begin{align*}
& \hspace{-1cm} \EX \left[ \log \int \frac{e^{\hat{\rho}_1(Y_0^T)-\rho_2(\Delta_n, m, Y_0^T)}}{\EX[e^{\rho_1(M, Y_0^T)-\rho_2(\Delta_n, M, Y_0^T)}|Y(\Delta_n), m]} d\mu_M(m) \right] \\
& \leq \EX \left[ \log \int e^{\hat{\rho}_1(Y_0^T)-\rho_2(\Delta_n, m, Y_0^T)} \EX[e^{-\rho_1(M, Y_0^T)+\rho_2(\Delta_n, M, Y_0^T)}|Y(\Delta_n), m] d\mu_M(m) \right]\\
& \leq \log \EX \left[ e^{\hat{\rho}_1(Y_0^T)-\rho_2(\Delta_n, M, Y_0^T)} \EX[e^{-\rho_1(M, Y_0^T)+\rho_2(\Delta_n, M, Y_0^T)}|Y(\Delta_n), M] \left(\frac{d\mu_Y}{d\mu_B}(Y_0^T)\right)/\left(\frac{d\mu_{Y|M}}{d\mu_B}(Y_0^T|M)\right) \right]\\
& =\log \EX \left[ e^{\rho_1(M, Y_0^T)-\rho_2(\Delta_n, M, Y_0^T)} \EX[e^{-\rho_1(M, Y_0^T)+\rho_2(\Delta_n, M, Y_0^T)}|Y(\Delta_n), M] \right].
\end{align*}
Now, applying the Cauchy-Schwarz inequality, we have
\begin{align*}
& \EX^2 \left[ e^{\rho_1(M, Y_0^T)-\rho_2(\Delta_n, M, Y_0^T)} \EX[e^{-\rho_1(M, Y_0^T)+\rho_2(\Delta_n, M, Y_0^T)}|Y(\Delta_n), M] \right] \\
& \hspace{2cm} \leq \EX \left[ e^{2 \rho_1(M, Y_0^T)- 2 \rho_2(\Delta_n, M, Y_0^T)} \right] \EX\left[\EX^2[e^{-\rho_1(M, Y_0^T)+\rho_2(\Delta_n, M, Y_0^T)}|Y(\Delta_n), M] \right]\\
& \hspace{2cm} \leq \EX \left[ e^{2 \rho_1(M, Y_0^T)- 2 \rho_2(\Delta_n, M, Y_0^T)} \right] \EX\left[\EX [e^{-2\rho_1(M, Y_0^T)+2\rho_2(\Delta_n, M, Y_0^T)}|Y(\Delta_n), M] \right]\\
& \hspace{2cm} = \EX \left[ e^{2 \rho_1(M, Y_0^T)- 2 \rho_2(\Delta_n, M, Y_0^T)} \right] \EX\left[e^{-2\rho_1(M, Y_0^T)+2\rho_2(\Delta_n, M, Y_0^T)} \right].
\end{align*}
Again, applying the Cauchy-Schwarz inequality, we have
\begin{align*}
\EX^2[e^{2 \rho_1(M, Y_0^T)-2 \rho_2(\Delta_n, M, Y_0^T)}] & = \EX^2[e^{-2 \int_0^T (g(s)-\bar{g}_{\Delta_n}(s)) dB(s) - \int_0^T (g(s)-\bar{g}_{\Delta_n}(s))^2 ds}]\\
& = \EX^2[e^{-2 \int_0^T (g(s)-\bar{g}_{\Delta_n}(s)) dB(s) - 4 \int_0^T (g(s)-\bar{g}_{\Delta_n}(s))^2 ds + 3 \int_0^T (g(s)-\bar{g}_{\Delta_n}(s))^2 ds}]\\
& \leq \EX[e^{-4 \int_0^T (g(s)-\bar{g}_{\Delta_n}(s)) dB(s) - 8 \int_0^T (g(s)-\bar{g}_{\Delta_n}(s))^2 ds}] \EX[e^{6 \int_0^T (g(s)-\bar{g}_{\Delta_n}(s))^2 ds}]\\
& \leq \EX[e^{6 \int_0^T (g(s)-\bar{g}_{\Delta_n}(s))^2 ds}],
\end{align*}
where for the last inequality, we have used Fatou's lemma. Now, using a largely parallel argument as in Step $1$ coupled with Lemma~\ref{square-exponential}~$c)$, we conclude that for any $0 < \varepsilon < 1$,
$$
\EX^2[e^{2 \rho_1(M, Y_0^T)-2 \rho_2(\Delta_n, M, Y_0^T)}] \leq \EX[e^{6 \int_0^T (g(s)-\bar{g}_{\Delta_n}(s))^2 ds}] = 1 + O(\delta_n^{1-\varepsilon}),
$$
which further leads to (\ref{Gn-Conv}).

The theorem then immediately follows from Steps $1$, $2$ and the fact that
$$
0 \leq I(M; Y_0^T) - I(M; Y(\Delta_{T, n})).
$$

\end{proof}

\section{Concluding Remarks}

As opposed to the Brownian motion formulation in (\ref{ct}), a continuous-time AWGN channel can be alternatively characterized by the following white noise formulation:
$$
Y(t) = X(t) + Z(t),~~~t \in \mathbb{R}
$$
where $\{Z(t)\}$ is a white Gaussian noise with flat spectral density $1$, and slightly abusing the notation, we have still used $X$ and $Y$, parameterized by $t \in \mathbb{R}$, to represent the channel input and output, respectively. While there is a comprehensive comparison between these two models in~\cite{LiuHan2019}, we emphasize here that the Browninan motion formulation enables a more rigorous information-theoretic examination of AWGN channels and empowers the tools in the theory of stochastic calculus that seem to be essential for more quantitative results for such channels.

Indeed, we believe the framework and techniques developed in this work can be applied to a more quantitative investigation of sampling a wider range of Gaussian channels possibly with different input constraints and related issues, detailed below. First, it is possible that our approach can be applied to obtain a faster rate of convergence for sampling of a peak-power constrained AWGN channel since, as observed in~\cite{Kenneth2018}, the peak-power constraint can provide some much needed uniformity property.
Second, a result by Elias~\cite{Elias1961} has been used to re-derive~\cite{GallagerNakiboglu2010} the somewhat surprising result that for feedback AWGN channels, the Schalkwijk-Kailath scheme yields the decoding error probability that decreases as a second-order exponent in block length. Noticing that Elias' argument in fact used discrete-time MMSE and our treatment essentially quantifies the difference between discrete-time and continuous-time MMSEs, it is worthwhile to investigate whether a rate of convergence result on the decoding error probability can be established for continuous-time feedback AWGN channels. Third, we can also consider sampling of continuous-time additive non-white Gaussian channels. To the best of our knowledge, results in this direction are scarce, but there are a great deal of efforts devoted to discrete-time feedback non-white Gaussian channels (see, e.g.,~\cite{kim10} and references therein). Given the recently obtained counterpart results in discrete time~\cite{LiuTao2019}, it is promising that our treatment can be adapted to such channels, in particular, additive channels with continuous-time auto-regressive and moving average Gaussian noises.

\bigskip \bigskip

{\bf Acknowledgement.} This work is supported by the Research Grants Council of the Hong Kong Special Administrative Region, China, under Project 17301017 and by the National Natural Science Foundation of China, under Project 61871343.

\section*{Appendices} \appendix

\section{Proof of (\ref{Liu-Han-Sampling-Theorem})} \label{already-by-GY}

Consider the continuous-time AWGN channel (\ref{ct}) with the input as in (\ref{non-feedback-case}) or (\ref{feedback-case}). Assuming that the channel input $\{X(t)\}$ is integrable over $[0, T]$, $T > 0$, that is,
\begin{equation} \label{square-integrable}
\int_0^T |X(t)| dt < \infty,~~~a.s.
\end{equation}
we will in this section prove that
\begin{equation} \label{Liu-Han-Sampling-Theorem}
\lim_{n \to \infty} I(M; Y(\Delta_{T, n})) = I(M; Y_0^T).
\end{equation}
As previously mentioned, this result has been implicitly derived in~\cite{GelfandYaglom1957}, and so we only sketch its proof for brevity.

First of all, an appropriately modified version of the proof of Theorem $1.3$ in~\cite{GelfandYaglom1957} can be used to establish that
$$
I(M; Y_0^T) = \sup I(M; Y(\{t_1, t_2, \dots, t_n\})),
$$
where the supremum is over all possible $n \in \mathbb{N}$ and $t_1, t_2, \dots, t_n \in [0, T]$. It immediately follows that
$$
\limsup_{n \to \infty} I(M; Y(\Delta_{T, n})) \leq I(M; Y_0^T).
$$
So, to prove (\ref{Liu-Han-Sampling-Theorem}), it suffices to prove the other direction:
$$
\liminf_{n \to \infty} I(M; Y(\Delta_{T, n})) \geq I(M; Y_0^T).
$$
To this end, for each $m \in \mathbb{N}$, we choose a finite subset $\Pi_{T}^{(m)} \subseteq [0, T]$ such that
$$
I(X_0^T; Y_0^T) = \lim_{m \to \infty} I(X_0^T; Y(\Pi_{T}^{(m)})).
$$
Now, for fixed $\Pi_{T}^{(m)}$ and for each $n \in \mathbb{N}$, we choose $\hat{\Pi}_{T, n}^{(m)} \subset \Delta_{T, n}$ so that $Y(\hat{\Pi}_{T, n}^{(m)})$ is convergent to $Y(\Pi_T^{(m)})$ in distribution. Here we note that the existence of $\{\hat{\Pi}_{T, n}^{(m)}\}$ can be justified by the continuity of $\{Y(t)\}$, which is due to the assumed integrability of $\{X(t)\}$. Then, by Property II at Page $211$ of~\cite{GelfandYaglom1957}, we have
$$
\liminf_{n \to \infty} I(M; Y(\hat{\Pi}_{T, n}^{(m)})) \geq I(M; Y(\Pi_{T}^{(m)})).
$$
Taking $m$ to infinity and using the fact that $\hat{\Pi}_{T, n}^{(m)} \subset \Delta_{T, n}$ for all $n$, we have
$$
\liminf_{n \to \infty} I(M; Y(\Delta_{T, n})) \geq I(M; Y_0^T),
$$
as desired.

\section{Proof of Lemma~\ref{square-exponential}} \label{proof-square-exponential}

We will only prove $a$), the proof of $b$) being largely parallel.

First of all, it can be verified that
\begin{align*}
\EX[Z_{max}^2] & = - \left. t \mathbb{P}(Z_{max}^2 \geq t) \right|_{0}^{\infty} + \int_0^{\infty} \mathbb{P}(Z_{max}^2 \geq t) dt\\
&=- \lim_{t \to \infty} t \mathbb{P}(Z_{max}^2 \geq t) + \int_0^{\infty} \mathbb{P}(Z_{max} \leq -\sqrt{t}) dt + \int_0^{\infty} \mathbb{P}(Z_{max} \geq \sqrt{t}) dt,
\end{align*}
Note that
\begin{align*}
\mathbb{P}(Z_{max} \leq -\sqrt{t}) & = \mathbb{P}(Z_1 \leq -\sqrt{t}, Z_2 \leq -\sqrt{t}, \dots, Z_n \leq -\sqrt{t})\\
&= \mathbb{P}(\sqrt{n} Z_1 \geq \sqrt{n t}, \sqrt{n} Z_2 \geq \sqrt{n t}, \dots, \sqrt{n} Z_n \geq \sqrt{n t})\\
&= \left(\int_{\sqrt{n t}}^{\infty} \frac{1}{\sqrt{2 \pi}} e^{-x^2/2} dx \right)^n,
\end{align*}
where we used the fact that each $\sqrt{n} Z_i$ is a standard normal random variable. Now, using the well-known fact that for any $x > 0$,
$$
\frac{2}{\sqrt{\pi}} \int_x^{\infty} e^{-t^2} dt \leq e^{-x^2},
$$
we derive
\begin{equation} \label{Z-less-than-t}
\mathbb{P}(Z_{max} \leq -\sqrt{t}) = \left(\int_{\sqrt{n t}}^{\infty} \frac{1}{\sqrt{2 \pi}} e^{-x^2/2} dx \right)^n = \left(\frac{1}{2} \frac{2}{\sqrt{\pi}} \int_{\sqrt{\frac{n t}{2}}}^{\infty} e^{-s^2} ds \right)^n \leq \left( \frac{1}{2} \right)^n  e^{-n^2 t/2}.
\end{equation}
And moreover,
\begin{align*}
\mathbb{P}(Z_{max} \geq \sqrt{t}) & = 1 - \mathbb{P}(Z_{max} \leq \sqrt{t})\\
&= 1 - \mathbb{P}(\sqrt{n} Z_1 \leq \sqrt{n t}, \sqrt{n} Z_2 \leq \sqrt{n t}, \dots, \sqrt{n} Z_n \leq \sqrt{n t})\\
&= 1 - \left( \int_{-\infty}^{\sqrt{n t}} \frac{1}{\sqrt{2 \pi}} e^{-x^2/2} dx \right)^n\\
&= 1 - \left(1- \int_{\sqrt{n t}}^{\infty} \frac{1}{\sqrt{2 \pi}} e^{-x^2/2} dx \right)^n.
\end{align*}
Now, using the well-known fact that for any $x > -1$,
$$
\frac{x}{1+x} \leq \log (1+x),
$$
and for any $x \leq 0$,
$$
1+x \leq e^x,
$$
we derive that for any $x$ with $|x| < 1$,
$$
(1-x)^n = e^{n \log (1-x)} \geq e^{-n x /(1-x)} \geq 1 - \frac{n x}{1-x}.
$$
It then follows that
\begin{align}
\mathbb{P}(Z_{max} \geq \sqrt{t}) & = 1 - \left(1- \int_{\sqrt{n t}}^{\infty} \frac{1}{\sqrt{2 \pi}} e^{-x^2/2} dx \right)^n \nonumber\\
&\leq 1 - \left(1- \frac{1}{2} e^{-n t/2} \right)^n \nonumber\\
&\leq 1 - \left(1- \frac{\frac{n}{2} e^{-n t/2}}{1-\frac{1}{2} e^{-n t/2}} \right) \nonumber\\
&= \frac{\frac{n}{2} e^{-n t/2}}{1-\frac{1}{2} e^{-n t/2}}. \label{Z-greater-than-t}
\end{align}
Now, using (\ref{Z-less-than-t}) and (\ref{Z-greater-than-t}), we have that for any $t > 0$,
\begin{align*}
t \mathbb{P}(Z_{max}^2 \geq t) & =  t \mathbb{P}(Z_{max} \leq -\sqrt{t}) + t \mathbb{P}(Z_{max} \geq \sqrt{t})\\
& \leq t \left( \frac{1}{2} \right)^n e^{-n^2 t/2} + t \frac{\frac{n}{2} e^{-n t/2}}{1-\frac{1}{2} e^{-n t/2}},
\end{align*}
which immediately implies that
\begin{equation} \label{later-on}
\lim_{t \to \infty} t \mathbb{P}(Z_{max}^2 \geq t) = 0.
\end{equation}
Moreover, using (\ref{Z-less-than-t}), we derive
\begin{equation} \label{Z-less-than-t-integral}
\int_0^{\infty} \mathbb{P}(Z_{max} \leq -\sqrt{t}) dt \leq \left( \frac{1}{2} \right)^n \int_0^{\infty} e^{-n^2 t/2} dt = \frac{2}{n^2} \left( \frac{1}{2} \right)^n.
\end{equation}
And, using (\ref{Z-greater-than-t}), we have that for any $0 < a < 1$,
\begin{align}
\int_0^{\infty} \mathbb{P}(Z_{max} \geq \sqrt{t}) dt & = \int_0^{\infty} 1 - \left(1- \int_{\sqrt{n t}}^{\infty} \frac{1}{\sqrt{2 \pi}} e^{-x^2/2} dx \right)^n dt \nonumber\\
&\leq \int_0^{n^{-a}} 1 - \left(1- \frac{1}{2} e^{-n t/2} \right)^n dt + \int_{n^{-a}}^{\infty} 1 - \left(1- \frac{1}{2} e^{-n t/2} \right)^n dt \nonumber\\
&\leq \int_0^{n^{-a}} dt + \int_{n^{-a}}^{\infty} 1 - \left(1- \frac{\frac{n}{2} e^{-n t/2}}{1-\frac{1}{2} e^{-n t/2}} \right) dt \nonumber\\
&\leq \int_0^{n^{-a}} dt + \int_{n^{-a}}^{\infty} \frac{\frac{n}{2} e^{-n t/2}}{1-\frac{1}{2} e^{-n t/2}} dt \nonumber\\
&= O(n^{-a}) + O(e^{-n^{1-a}}) \nonumber\\
&= O(n^{-a}). \label{Z-greater-than-t-integral}
\end{align}
Finally, combining (\ref{later-on}), (\ref{Z-less-than-t-integral}) and (\ref{Z-greater-than-t-integral}), we conclude that for any $0 < a < 1$,
$$
\EX[Z_{max}^2] = O(n^{-a}),
$$
as desired.

\bigskip

\end{document}